\newtheorem{theorem}{Theorem}[section]
\newtheorem{proposition}[theorem]{Proposition}
\theoremstyle{definition}
\newtheorem{example}[theorem]{Example}
\newtheorem{definition}[theorem]{Definition}
\begin{document}
\title{\bf Interval-valued fuzzy  graphs} \normalsize
\author{{\bf Muhammad Akram$^{\bf a}$\ and \ Wieslaw A. Dudek$^{\bf b}$} \\
{\small {\bf a.}  Punjab University College of Information
Technology,
University of the Punjab,}\\
{\small  Old Campus, Lahore-54000, Pakistan.}\\
  {\small E-mail: makrammath@yahoo.com,
  ~~~~m.akram@pucit.edu.pk}\\
 {\small {\bf b.}  Institute of Mathematics and Computer Science,  Wroclaw University
of Technology,}\\
 {\small  Wyb. Wyspianskiego 27, 50-370,Wroclaw, Poland.}\\
  {\small E-mail: dudek@im.pwr.wroc.pl} }
\date{}
 \maketitle
 \hrule
\begin{abstract}
We define the Cartesian product, composition, union and join on
interval-valued fuzzy graphs and investigate some of their properties. We
also introduce the notion of interval-valued fuzzy complete graphs
and present some properties of self complementary and self weak
complementary interval-valued fuzzy complete graphs.
\end{abstract}
{\bf Keywords}: Interval-valued fuzzy graph,
Self complementary, Interval-valued fuzzy complete graph.\\
 {\bf  Mathematics Subject Classification 2000}: 05C99\\
 \hrule
 \footnote{Corresponding Author:\\  M. Akram
(makrammath@yahoo.com, m.akram@pucit.edu.pk)}

\section{Introduction}
In 1975, Zadeh \cite{LA1} introduced the notion of  interval-valued fuzzy sets  as an extension of fuzzy sets \cite{LA} in which the values of the membership degrees are intervals of numbers instead of the numbers.
 Interval-valued fuzzy sets provide a more adequate description of uncertainty than traditional fuzzy sets. It is therefore important to use interval-valued fuzzy sets in applications, such as fuzzy control. One of the  computationally most intensive part of fuzzy control is defuzzification \cite{JMM}.
   Since interval-valued fuzzy sets are widely studied and used,  we describe briefly the work of Gorzalczany on approximate reasoning \cite{MB1, MB2}, Roy and Biswas on medical diagnosis \cite{MK}, Turksen on
multivalued logic \cite{IB} and  Mendel on intelligent control \cite{JMM}. \\
The fuzzy graph theory as a generalization of Euler's graph theory
was first introduced by Rosenfeld \cite{RA} in 1975. The fuzzy
relations between fuzzy sets were  first considered by  Rosenfeld
and he developed the structure of fuzzy graphs obtaining analogs
of several graph theoretical concepts. Later,  Bhattacharya
\cite{BP} gave some remarks on fuzzy graphs, and some operations on
fuzzy graphs were introduced by Mordeson and Peng \cite{JN1}. The
complement of a fuzzy graph was defined by Mordeson \cite{JN2} and
further studied by Sunitha and Vijayakumar \cite{MS}. Bhutani and
Rosenfeld introduced the concept of $M$-strong fuzzy graphs in
\cite{KR2} and studied some properties. The concept of   strong
arcs in fuzzy graphs was discussed in \cite{KR1}. Hongmei and
Lianhua gave the definition of interval-valued graph in \cite{JH}.\\
In this paper, we define the operations of Cartesian product,
composition, union  and join on interval-valued fuzzy graphs and
investigate some properties.  We study isomorphism (resp. weak
isomorphism) between interval-valued fuzzy graphs is an
equivalence relation (resp. partial order). We introduce the
notion of interval-valued fuzzy complete graphs and present some
properties of self complementary and self weak complementary
interval-valued fuzzy complete graphs.\\
 The definitions and terminologies that we used in this paper are
standard. For other notations, terminologies and applications, the
readers are referred to \cite{MA08, MA, AA, KT, FH, KP, SMS2, JN11,AM1, AP, LAZ75}.

\section{Preliminaries}

A {\it graph} is an ordered pair $G^*=(V,E),$ where $V$ is  the
set of vertices of $G^*$ and $E$ is the set of edges of $G^*$. Two
vertices $x$ and $y$ in a graph $G^*$ are said to be adjacent in
$G^*$ if $\{x,y\}$ is in an edge of $G^*$. (For simplicity an edge
$\{x,y\}$ will be denoted by $xy$.) A {\it simple graph} is a
graph without  loops and multiple edges. A {\it complete
graph} is a simple graph in which every pair of distinct vertices
is connected by an edge. The complete graph on $n$ vertices has
$n$ vertices and $n(n-1)/2$ edges. We will consider only graphs
with the finite number of vertices and edges.

By a {\it complementary graph} $\overline{G^*}$ of a simple graph
$G^*$ we mean a graph having the same vertices as $G^*$ and such
that two vertices are adjacent in $\overline{G^*}$ if and only if
they are not adjacent in $G^*$.

An {\it isomorphism} of graphs $G^*_1$ and $G^*_2$ is a bijection
between the vertex sets of  $G^*_1$ and $G^*_2$ such that any two
vertices $v_1$ and $v_2$ of $G^*_1$ are adjacent in $G^*_1$ if and
only if $f(v_1)$ and $f(v_2)$ are adjacent in $G^*_2$. Isomorphic
graphs are denoted by $G^*_1 \simeq G^*_2.$

Let $G^*_1=(V_1, E_1)$ and $G^*_2=(V_2, E_2)$ be two simple graphs,
we can construct several new graphs. The first construction called
the {\it Cartesian product} of $G^*_1$ and $G^*_2$ gives a graph
$G^*_1 \times G^*_2=(V, E)$ with $V=V_1 \times V_2$ and
\[
E= \{(x,x_2)(x,y_2)| x\in V_1, x_2y_2\in E_2\}\cup\{(x_1,z)(y_1,
z)|x_1y_1 \in E_1,z\in V_2 \}.
\]
The {\it composition} of graphs $G^*_1$ and $G^*_2$ is the graph
$G^*_1[G^*_2]=(V_1 \times V_2,E^0)$, where
$$
E^0= E\cup\{(x_1,x_2)(y_1,y_2)|x_1y_1 \in E_1, x_2\neq y_2\}
$$
and $E$ is defined as in $G^*_1 \times G^*_2$. Note that
$G^*_1[G^*_2]\neq G^*_2[G^*_1].$

The {\it union} of graphs $G^*_1$ and $G^*_2$ is defined as $G^*_1
\cup G^*_2=(V_1\cup V_2, E_1\cup E_2)$.

The {\it join} of $G^*_1$ and $G^*_2$ is the simple graph $G^*_1 +
G^*_2=(V_1 \cup V_2, E_1 \cup E_2 \cup E')$, where $E'$ is the set
of all edges joining the nodes of $V_1$ and $V_2$. In this
construction it is assumed that $V_1\cap V_2\neq\emptyset$ .

By a {\it fuzzy subset} $\mu$ on a set $X$ is mean a map $\mu
:X\to [0,1]$. A map $\nu: X\times X\to [0,1]$ is called  a {\it
fuzzy relation} on $X$ if $\nu(x,y)\leq \min(\mu(x),\mu(y))$ for
all $x,y\in X$. A fuzzy relation  $\nu$ is {\it symmetric} if
$\nu(x, y)= \nu(y, x)$ for all $x,y\in X$.

An {\it interval number} $D$ is an interval $[a^{-}, a^{+}]$ with
$0\leq a^-\leq a^+\leq 1$. The interval $[a,a]$ is identified with
the number $a\in [0,1]$. $D[0,1]$ denotes the set of all interval
numbers.

For interval numbers $D_1=[a_1^{-}, b_1^{+}]$  and $D_2=[a_2^{-},
b_2^{+}]$, we define
\begin{itemize}
\item ${\rm rmin}(D_1, D_2)={\rm rmin}([a_1^{-}, b_1^{+}],
[a_2^{-}, b_2^{+}])= [\min\{a_1^{-}, a_2^{-}\}, \min\{b_1^{+},
b_2^{+}\}]$,
\item ${\rm rmax}(D_1, D_2)={\rm rmax}([a_1^{-},
b_1^{+}], [a_2^{-}, b_2^{+}])= [\max\{a_1^{-}, a_2^{-}\},
\max\{b_1^{+}, b_2^{+}\}]$,
\item  $D_1 + D_2=[a_1^-+a_2^--a_1^-\cdot a_2^-, b_1^++b_2^+-b_1^+\cdot
b_2^+]$,
\item  $D_1 \leq D_2$  $\Longleftrightarrow$  $a_1^{-} \leq a_2^{-}$ and
$b_1^{+} \leq b_2^{+}$,
\item  $D_1=D_2$     $\Longleftrightarrow$ $a_1^{-} = a_2^{-}$ and $b_1^{+} = b_2^{+}$,
\item  $D_1 <D_2$ $\Longleftrightarrow$  $D_1 \leq D_2$ and $D_1 \neq D_2$,
\item $kD= k[a_1^{-}, b_1^{+}]= [ka_1^{-}, kb_1^{+}]$, where $ 0 \leq  k \leq
1$.
\end{itemize}
Then, $(D[0,1],\leq,\vee,\wedge)$ is a complete lattice with
$[0,0]$ as the least element and $[1,1]$ as the greatest.

The {\it interval-valued fuzzy set} $A$ in $V$ is defined by
\[
A=\{(x, [\mu^-_A(x), \mu^+_A(x)]): x \in V \},
\]
where $\mu^-_A(x)$ and $\mu^+_A(x)$ are fuzzy subsets of $V$ such
that $\mu^-_A (x)\leq \mu^+_A(x)$ for all $x\in V.$ For any two
interval-valued sets $A=[\mu^-_A(x),\mu^+_A(x)]$ and
$B=[\mu^-_B(x), \mu^+_B(x)])$ in $V$ we define:
\begin{itemize}
\item $A\bigcup B=\{(x,\max(\mu^-_A(x),\mu^-_B(x)),\max(\mu^+_A(x),\mu^+_B(x))):x\in V\}$,
\item $A\bigcap B=\{(x,\min(\mu^-_A(x),\mu^-_B(x)),\min(\mu^+_A(x),\mu^+_B(x))):x\in V\}$.
\end{itemize}

If $G^*=(V,E)$ is a graph, then by an {\it interval-valued fuzzy
relation} $B$ on a set $E$ we mean an interval-valued fuzzy set
such that
\[
\mu^-_B(xy)\leq\min(\mu^-_A(x),\mu^-_A(y)),
\]
\[
 \mu^+_B(xy) \leq \min(\mu^+_A(x),\mu^+_A(y))
\]
for all $xy\in E$.

\section{Operations on interval-valued fuzzy graphs}

Throughout in this paper, $G^*$ is a crisp graph, and $G$ is an
interval-valued fuzzy graph.

\begin{definition}
By an {\it interval-valued fuzzy graph} of a graph $G^*=(V,E)$ we
mean a pair $G=(A,B)$, where $A=[\mu^-_A,\mu^+_A]$ is an
interval-valued fuzzy set on $V$ and $B=[\mu^-_B,\mu^+_B]$ is an
interval-valued fuzzy relation on $E$.
\end{definition}

\begin{example}
Consider a graph $G^*=(V, E)$  such that $V=\{x,y,z\}$, $E=\{xy,
yz,zx\}$. Let $A$ be  an  interval-valued fuzzy set of $V$ and let $B$ be an interval-valued fuzzy set of $E \subseteq  V \times V$ defined by
 \[A=< (\frac{x}{0.2}, \frac{y}{0.3}, \frac{z}{0.4}), (\frac{x}{0.4}, \frac{y}{0.5}, \frac{z}{0.5})
 >, \]
 \[B=< (\frac{xy}{0.1}, \frac{yz}{0.2}, \frac{zx}{0.1}), (\frac{xy}{0.3}, \frac{yz}{0.4}, \frac{zx}{0.4})
 >. \]

\begin{center}
\begin{tikzpicture}[scale=3]
 \path (1,0.30) node (l) {$y$};
 \path (1.9,-1.55) node (l){$G$};
\path (2,0.30) node (l) {$z$};
\path (2,-1.30) node (l) {$x$};
\path (1.5,0.1) node (l) {\tiny {$[0.2, 0.4]$}};
\path (2.20,-0.5) node (l) {\tiny{$[0.1,0.4]$}};
\path (1.7,-0.5) node (l) {\tiny{$[0.1, 0.3]$}};

\tikzstyle{every node}=[draw,shape=circle];

%%% Draw the nodes
\path (2,0) node (z) {\tiny{$[0.4, 0.5]$}};
\path (2,-1) node (x) {\tiny{$[0.2, 0.4]$}};
\path (1,0) node (y) {\tiny{$[0.3, 0.5]$}};

%%%% Draw the lines

\draw
      (y) -- (z)
      (y) -- (x)
       (z) -- (x);
\end{tikzpicture}
\end{center}

By routine computations, it is easy to see that $G=(A, B)$ is an
interval-valued fuzzy graph of $G^*$.
\end{example}

\begin{definition}\label{D-33}
The {\it Cartesian product $G_1\times G_2$ of two interval-valued
fuzzy graphs} $G_1=(A_1,B_1)$ and $G_2=(A_2,B_2)$ of the graphs
$G^*_1=(V_1,E_1)$ and $G^*_2=(V_2,E_2)$ is defined as a pair
$(A_1\times A_1,B_1\times B_2)$ such that
\begin{itemize}
\item [\rm(i)]
$\left\{\begin{array}{ll}(\mu^-_{A_1} \times \mu^-_{A_2})(x_1,
x_2)=\min(\mu^-_{A_1}(x_1),\mu^-_{A_2}(x_2)) \\
(\mu^+_{A_1}\times\mu^+_{A_2})(x_1, x_2)=\min(\mu^+_{A_1}(x_1),
\mu^+_{A_2}(x_2))\end{array}\right.$

for all\ $(x_1, x_2) \in V,$

\item [\rm(ii)]
$\left\{\begin{array}{ll}(\mu^-_{B_1}\times\mu^-_{B_2})((x,x_2)(x,y_2))=\min(\mu^-_{A_1}(x),
\mu^-_{B_2}(x_2y_2))\\
(\mu^+_{B_1} \times\mu^+_{B_2})((x,x_2)(x,y_2))=
\min(\mu^+_{A_1}(x), \mu^+_{B_2}(x_2y_2))\end{array}\right.$

for all $x\in V_1$ and $x_2y_2 \in E_2$,

\item [\rm(iii)]
$\left\{\begin{array}{ll}(\mu^-_{B_1}\times \mu^-_{B_2})((x_1,z)(y_1,z))=\min(\mu^-_{B_1}(x_1y_1),\mu^-_{A_2}(z))\\
(\mu^+_{B_1}\times\mu^+_{B_2})((x_1,z)(y_1,z))=\min(\mu^+_{B_1}(x_1y_1),\mu^+_{A_2}(z))\end{array}\right.$

for all $z\in V_2$ and $x_1y_1 \in E_1$.
\end{itemize}
\end{definition}

\begin{example}\label{Ex-34} Let $G^*_1=(V_1, E_1)$ and $G^*_2=(V_2, E_2)$ be graphs such that
$V_1=\{a, b\}$, $V_2=\{c, d\}$, $E_1=\{ab\}$  and $E_2=\{cd\}$.
Consider two interval-valued fuzzy graphs $G_1=(A_1,B_1)$ and
$G_2=(A_2,B_2)$, where
 \[A_1=< (\frac{a}{0.2}, \frac{b}{0.3}), (\frac{a}{0.4}, \frac{b}{0.5})
 >, \ \ \ \ \ \ B_1=< \frac{ab}{0.1}, \frac{ab}{0.2} >, \]
 \[A_2=< (\frac{c}{0.1}, \frac{d}{0.2}), (\frac{c}{0.4}, \frac{d}{0.6})
 >, \ \ \ \ \ \ B_2=< \frac{cd}{0.1}, \frac{cd}{0.3} >. \]
Then, as it is not difficult to verify
\[  (\mu^-_{B_1}\times\mu^-_{B_2})((a,c)(a,d))=0.1, \ \ \ \ \
\ (\mu^+_{B_1}\times\mu^+_{B_2})((a,c)(a,d))=0.3,\]
\[  (\mu^-_{B_1}\times\mu^-_{B_2})((a,c)(b,c))=0.1, \ \ \ \ \ \ (\mu^+_{B_1}\times\mu^+_{B_2})((a,c)(b,c))=0.2,\]
\[  (\mu^-_{B_1}\times\mu^-_{B_2})((a,d)(b,d))=0.1, \ \ \ \ \ \ (\mu^+_{B_1}\times\mu^+_{B_2})((a,d)(b,d))=0.2,\]
\[  (\mu^-_{B_1}\times\mu^-_{B_2})((b,c)(b,d))=0.1, \ \ \ \ \ \ (\mu^+_{B_1}\times\mu^+_{B_2})((b,c)(b,d))=0.3.\]
\begin{center}

\begin{tikzpicture}[scale=3]
\path (1,0.30) node (l) {$c$};
\path (1,-1.50) node (l) {$G_2$};
\path (0,0.30) node (l) {$a$};
\path (0,-1.50) node (l) {$G_1$};
\path (0,-1.30) node (l) {$b$};
\path (1,-1.30) node (l) {$d$};
\path (-0.2,-0.5) node (l) {\tiny{$[0.1,0.2]$}};
\path (1.20,-0.5) node (l) {\tiny{$[0.1,0.3]$}};
\tikzstyle{every node}=[draw,shape=circle];

%%% Draw the nodes
\path (0,0) node (a) {\tiny{$[0.2,0.4]$}};
\path (1,0) node (b) {\tiny{$[0.1,0.4]$}};
\path (0,-1) node (d) {\tiny{$[0.3,0.5]$}};
\path (1,-1) node (e) {\tiny{$[0.2,0.6]$}};

%%%% Draw the lines

\draw (a) -- (d)
      (b) -- (e);

\end{tikzpicture}
\begin{tikzpicture}[scale=3]

\path (0.5,-0.5) node (l) {$G_1 \times G_2$};
\path (1,0.30) node (l) {$(a,d)$};
\path (0,0.30) node (l) {$(a,c)$};
\path (0,-1.30) node (l) {$(b,c)$};
\path (1,-1.30) node (l) {$(b,d)$};
\path (0.5,0.1) node (l) {\tiny {$[0.1,0.3]$}};

\path (-0.2,-0.5) node (l) {\tiny{$[0.1,0.2]$}};
\path (1.20,-0.5) node (l) {\tiny{$[0.1,0.2]$}};
\path (0.5,-1.10) node (l) {\tiny{$[0.1,0.3]$}};

\tikzstyle{every node}=[draw,shape=circle];

%%% Draw the nodes
\path (0,0) node (a) {\tiny{$[0.1,0.4]$}};
\path (1,0) node (b) {\tiny{$[0.2,0.4]$}};
\path (0,-1) node (d) {\tiny{$[0.1,0.4]$}};
\path (1,-1) node (e) {\tiny{$[0.2,0.5]$}};

%%%% Draw the lines

\draw (a) -- (b)
      (a) -- (d)
      (d) -- (e)
      (b) -- (e);
\end{tikzpicture}
\end{center}
By routine computations, it is easy to see that $G_1\times G_2$ is
an interval-valued fuzzy graph of $G^*_1\times G^*_2$.
\end{example}

\begin{proposition}
The Cartesian product $G_1\times G_2=(A_1\times A_2,B_1\times
B_2)$ of two interval-valued fuzzy graphs of the graphs $G^*_1$
and $G^*_2$ is an interval-valued fuzzy graph of $G^*_1 \times
G^*_2$.
\end{proposition}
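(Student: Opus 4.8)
The plan is to verify directly that the pair $(A_1\times A_2,\,B_1\times B_2)$ satisfies the definition of an interval-valued fuzzy graph on the crisp graph $G^*_1\times G^*_2$. Since $A_1\times A_2$ is, by Definition~\ref{D-33}(i), obtained by taking coordinatewise minima, the inequality $(\mu^-_{A_1}\times\mu^-_{A_2})(x_1,x_2)\le(\mu^+_{A_1}\times\mu^+_{A_2})(x_1,x_2)$ follows from $\mu^-_{A_i}\le\mu^+_{A_i}$ and the monotonicity of $\min$; the same remark applies to $B_1\times B_2$. So the real content is to check, for every edge $e$ of $G^*_1\times G^*_2$ with endpoints $u,v$,
\[
(\mu^-_{B_1}\times\mu^-_{B_2})(e)\le\min\bigl((\mu^-_{A_1}\times\mu^-_{A_2})(u),(\mu^-_{A_1}\times\mu^-_{A_2})(v)\bigr)
\]
together with the analogous inequality with $+$ in place of $-$.

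By the construction of $G^*_1\times G^*_2$, each edge $e$ has one of two forms: $(x,x_2)(x,y_2)$ with $x\in V_1$ and $x_2y_2\in E_2$, or $(x_1,z)(y_1,z)$ with $z\in V_2$ and $x_1y_1\in E_1$. I would carry out the first case in full and observe that the second is its mirror image. For $e=(x,x_2)(x,y_2)$, Definition~\ref{D-33}(ii) gives that the left-hand side above equals $\min(\mu^-_{A_1}(x),\mu^-_{B_2}(x_2y_2))$, while the right-hand side equals
\[
\min\bigl(\min(\mu^-_{A_1}(x),\mu^-_{A_2}(x_2)),\,\min(\mu^-_{A_1}(x),\mu^-_{A_2}(y_2))\bigr)=\min\bigl(\mu^-_{A_1}(x),\mu^-_{A_2}(x_2),\mu^-_{A_2}(y_2)\bigr).
\]
The inequality then follows from $\mu^-_{B_2}(x_2y_2)\le\min(\mu^-_{A_2}(x_2),\mu^-_{A_2}(y_2))$, which holds because $B_2$ is an interval-valued fuzzy relation on $E_2$. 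Replacing $\mu^-$ by $\mu^+$ throughout yields the upper bound, and the case $e=(x_1,z)(y_1,z)$ is handled identically after interchanging the two factors and invoking Definition~\ref{D-33}(iii) and the fact that $B_1$ is an interval-valued fuzzy relation on $E_1$.

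I do not expect a genuine obstacle here: the argument amounts to unwinding the definitions and applying the identity $\min(a,\min(b,c))=\min(a,b,c)$. The only place that needs care is the bookkeeping — keeping track of which vertex- and edge-membership bounds are relevant in each of the two edge types, and correctly simplifying the minimum over the (larger) index set appearing on the right-hand side. Once that organization is in place, the lower and upper inequalities fall out at once, so the proof is effectively one computation written twice (for $\mu^-$ and $\mu^+$) and mirrored once (for the two kinds of edges).
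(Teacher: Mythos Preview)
Your proposal is correct and follows essentially the same approach as the paper: both proofs dismiss the vertex conditions as immediate, split the edge set of $G_1^*\times G_2^*$ into the two types of edges, and for each type reduce the required inequality to the relation-condition on $B_2$ (resp.\ $B_1$) via the associativity of $\min$. The only cosmetic difference is that the paper writes out all four computations (both edge types, both $\mu^-$ and $\mu^+$) in full, whereas you do one in detail and appeal to symmetry for the rest.
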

\begin{proof}
We verify only conditions for $B_1\times B_2$ because conditions
for $A_1\times A_2$ are obvious.

Let $x\in V_1$, $x_2y_2\in E_2$. Then
\begin{eqnarray*}
(\mu^-_{B_1} \times  \mu^-_{B_2})((x, x_2)(x, y_2))   &=& \min(\mu^-_{A_1}(x), \mu^-_{B_2}(x_2y_2)) \\
   &\leq& \min(\mu^-_{A_1}(x), \min(\mu^-_{A_2}(x_2), \mu^-_{A_2}(y_2))) \\
   &=& \min(\min(\mu^-_{A_1}(x), \mu^-_{A_2}(x_2)), \min(\mu^-_{A_1}(x),
   \mu^-_{A_2}(y_2)))\\
   &=& \min((\mu^-_{A_1} \times \mu^-_{A_2})(x, x_2), (\mu^-_{A_1} \times \mu^-_{A_2})(x,
   y_2)),\\[4pt]
(\mu^+_{B_1} \times  \mu^+_{B_2})((x, x_2)(x, y_2))   &=& \min(\mu^+_{A_1}(x), \mu^+_{B_2}(x_2y_2)) \\
   &\leq& \min(\mu^+_{A_1}(x), \min(\mu^+_{A_2}(x_2), \mu^+_{A_2}(y_2))) \\
   &=& \min(\min(\mu^+_{A_1}(x), \mu^+_{A_2}(x_2)), \min(\mu^+_{A_1}(x),
   \mu^+_{A_2}(y_2)))\\
   &=& \min((\mu^+_{A_1} \times \mu^+_{A_2})(x, x_2), (\mu^+_{A_1} \times \mu^+_{A_2})(x,
   y_2)).
\end{eqnarray*}
   Similarly for $z\in V_2$ and $x_1y_1\in E_1$ we have
   \begin{eqnarray*}
(\mu^-_{B_1} \times  \mu^-_{B_2})((x_1, z)(y_1, z))
&=&\min(\mu^-_{B_1}(x_1 y_1),\mu^-_{A_2}(z))  \\
   &\leq& \min( \min(\mu^-_{A_1}(x_1), \mu^-_{A_1}(y_1)), \mu^-_{A_2}(z)) \\
   &=& \min(\min(\mu^-_{A_1}(x), \mu^-_{A_2}(z)), \min(\mu^-_{A_1}(y_1),
   \mu^-_{A_2}(z)))\\
   &=& \min((\mu^-_{A_1} \times \mu^-_{A_2})(x_1, z), (\mu^-_{A_1} \times \mu^-_{A_2})(y_1,
z)),\\[4pt]
(\mu^+_{B_1}\times\mu^+_{B_2})((x_1, z)(y_1, z))
&=&\min(\mu^+_{B_1}(x_1 y_1),\mu^+_{A_2}(z))  \\
   &\leq& \min( \min(\mu^+_{A_1}(x_1), \mu^+_{A_1}(y_1)), \mu^+_{A_2}(z)) \\
   &=& \min(\min(\mu^+_{A_1}(x), \mu^+_{A_2}(z)), \min(\mu^+_{A_1}(y_1),
   \mu^+_{A_2}(z)))\\
   &=& \min((\mu^+_{A_1} \times \mu^+_{A_2})(x_1, z), (\mu^+_{A_1} \times \mu^+_{A_2})(y_1,
z)).
\end{eqnarray*}
This completes the proof.
\end{proof}

\begin{definition}\label{D-36}
The {\it composition $G_1[G_2]=(A_1\circ A_2,B_1\circ B_2)$ of two
interval-valued fuzzy graphs} $G_1$ and $G_2$ of the graphs
$G^*_1$ and $G^*_2$ is defined as follows:
\begin{itemize}
    \item [\rm (i)] $\left\{\begin{array}{ll}(\mu^-_{A_1} \circ \mu^-_{A_2})(x_1, x_2)=\min(\mu^-_{A_1}(x_1),
    \mu^-_{A_2}(x_2)) \\
(\mu^+_{A_1} \circ \mu^+_{A_2})(x_1, x_2)=\min(\mu^+_{A_1}(x_1),
    \mu^+_{A_2}(x_2))\end{array}\right.$

for all $(x_1, x_2) \in V,$
     \item [\rm(ii)]
$\left\{\begin{array}{ll}(\mu^-_{B_1}\circ
\mu^-_{B_2})((x,x_2)(x,y_2))=\min(\mu^-_{A_1}(x),\mu^-_{B_2}(x_2y_2))\\
(\mu^+_{B_1}\circ\mu^+_{B_2})((x,x_2)(x,y_2))=\min(\mu^+_{A_1}(x),
\mu^+_{B_2}(x_2y_2))\end{array}\right.$

for all $x\in V_1$ and $x_2y_2 \in E_2$,

\item [\rm(iii)]
$\left\{\begin{array}{ll}(\mu^-_{B_1} \circ \mu^-_{B_2})((x_1,z)(y_1, z))=\min(\mu^-_{B_1}(x_1y_1), \mu^-_{A_2}(z))\\
(\mu^+_{B_1} \circ \mu^+_{B_2})((x_1,z)(y_1,z))=
\min(\mu^+_{B_1}(x_1y_1), \mu^+_{A_2}(z))\end{array}\right.$

for all $z\in V_2$ and $x_1y_1\in E_1$,

\item [\rm(iv)]  $\left\{\begin{array}{ll}(\mu^-_{B_1}\circ\mu^-_{B_2})((x_1, x_2)(y_1, y_2))=
\min(\mu^-_{A_2}(x_2), \mu^-_{A_2}(y_2),\mu^-_{B_1}(x_1y_1))\\
(\mu^+_{B_1} \circ \mu^+_{B_2})((x_1, x_2) (y_1,y_2))=
\min(\mu^+_{A_2}(x_2), \mu^+_{A_2}(y_2),
\mu^+_{B_1}(x_1y_1))\end{array}\right.$

for all $(x_1, x_2)(y_1, y_2) \in E^0-E$.
\end{itemize}
\end{definition}

\begin{example}\label{Ex-37} Let $G^*_1$ and $G^*_2$ be as in the previous example.
Consider two interval-valued fuzzy graphs $G_1=(A_1,B_1)$ and
$G_2=(A_2,B_2)$ defined by
 \[A_1=< (\frac{a}{0.2}, \frac{b}{0.3}), (\frac{a}{0.5}, \frac{b}{0.5})
 >, \ \ \ \ \ \ B_1=< \frac{ab}{0.2},\frac{ab}{0.4}>, \]
 \[A_2=< (\frac{c}{0.1}, \frac{d}{0.3}), (\frac{c}{0.4}, \frac{d}{0.6})
 >, \ \ \ \ \ \ B_2=<\frac{cd}{0.1}, \frac{cd}{0.3} >. \]
Then we have
\[(\mu^-_{B_1}\circ\mu^-_{B_2})((a,c)(a,d))=0.2,\ \ \ \ \ \ (\mu^+_{B_1}\circ\mu^+_{B_2})((a,c)(a,d))=0.3, \]
\[(\mu^-_{B_1}\circ\mu^-_{B_2})((b,c)(b,d))=0.1,\ \ \ \ \ \ (\mu^+_{B_1}\circ\mu^+_{B_2})((b,c)(b,d))=0.3, \]
\[(\mu^-_{B_1}\circ\mu^-_{B_2})((a,c)(b,c))=0.1,\ \ \ \ \ \ (\mu^+_{B_1}\circ\mu^+_{B_2})((a,c)(b,c))=0.4, \]
\[(\mu^-_{B_1}\circ\mu^-_{B_2})((a,d)(b,d))=0.2,\ \ \ \ \ \ (\mu^+_{B_1}\circ\mu^+_{B_2})((a,d)(b,d))=0.4, \]
\[(\mu^-_{B_1}\circ\mu^-_{B_2})((a,c)(b,d))=0.1,\ \ \ \ \ \ (\mu^+_{B_1}\circ\mu^+_{B_2})((a,c)(b,d))=0.4, \]
\[(\mu^-_{B_1}\circ\mu^-_{B_2})((b,c)(a,d))=0.1,\ \ \ \ \ \ (\mu^+_{B_1}\circ\mu^+_{B_2})((b,c)(a,d))=0.4. \]
\begin{center}

\begin{tikzpicture}[scale=3]
\path (1,0.30) node (l) {$c$};
\path (1,-1.50) node (l) {$G_2$};
\path (0,0.30) node (l) {$a$};
\path (0,-1.50) node (l) {$G_1$};
\path (0,-1.30) node (l) {$b$};
\path (1,-1.30) node (l) {$d$};
\path (-0.2,-0.5) node (l) {\tiny{$[0.2,0.4]$}};
\path (1.20,-0.5) node (l) {\tiny{$[0.1,0.3]$}};
\tikzstyle{every node}=[draw,shape=circle];

%%% Draw the nodes
\path (0,0) node (a) {\tiny{$[0.2,0.5]$}};
\path (1,0) node (b) {\tiny{$[0.1,0.4]$}};
\path (0,-1) node (d) {\tiny{$[0.3,0.5]$}};
\path (1,-1) node (e) {\tiny{$[0.3,0.6]$}};

%%%% Draw the lines

\draw (a) -- (d)
      (b) -- (e);

\end{tikzpicture}
\begin{tikzpicture}[scale=3]
\path (1,0.30) node (l) {$(a,d)$};
\path (0.5,-1.50) node (l) {$G_1 [G_2]$};
\path (0,0.30) node (l) {$(a,c)$};
\path (0,-1.30) node (l) {$(b,c)$};
\path (1,-1.30) node (l) {$(b,d)$};
\path (0.5,0.1) node (l) {\tiny {$[0.2,0.3]$}};

\path (-0.2,-0.5) node (l) {\tiny{$[0.1,0.4]$}};
\path (1.20,-0.5) node (l) {\tiny{$[0.2,0.4]$}};
\path (0.5,-1.10) node (l) {\tiny{$[0.1,0.3]$}};
\path (0.7,-0.6) node (l)[rotate=-45] {\tiny{$[0.1,0.4]$}};
\path (0.30,-0.6) node (l)[rotate=45] {\tiny{$[0.1,0.4]$}};

\tikzstyle{every node}=[draw,shape=circle];

%%% Draw the nodes
\path (0,0) node (a) {\tiny{$[0.1,0.4]$}};
\path (1,0) node (b) {\tiny{$[0.2,0.5]$}};
\path (0,-1) node (d) {\tiny{$[0.1,0.4]$}};
\path (1,-1) node (e) {\tiny{$[0.3,0.5]$}};

%%%% Draw the lines

\draw (a) -- (b)

      (a) -- (d)
      (d) -- (e)
      (b) -- (e)
      (a) -- (e)
      (b) -- (d);
\end{tikzpicture}

\end{center}
By routine computations, it is easy to see that $G_1[G_2]= (A_1
\circ A_2, B_1 \circ B_2)$ is an interval-valued fuzzy graph of
$G^*_1[G^*_2]$.
 \end{example}

\begin{proposition}\label{P-38}
The composition $G_1[G_2]$ of interval-valued fuzzy graphs $G_1$
and $G_2$ of $G^*_1$ and $G^*_2$ is an interval-valued fuzzy graph
of $G^*_1[G^*_2]$.
\end{proposition}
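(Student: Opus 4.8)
The plan is to mimic the proof of the previous proposition, verifying the defining inequalities for the interval-valued fuzzy relation $B_1\circ B_2$ on the edge set $E^0$ of $G^*_1[G^*_2]$; the conditions on the vertex part $A_1\circ A_2$ are immediate, since by clause (i) of Definition \ref{D-36} the maps $\mu^\pm_{A_1}\circ\mu^\pm_{A_2}$ are just coordinatewise minima and hence automatically satisfy $\mu^-_{A_1}\circ\mu^-_{A_2}\le\mu^+_{A_1}\circ\mu^+_{A_2}$. The edge set $E^0$ splits into three families, and I would treat each in turn: edges $(x,x_2)(x,y_2)$ with $x\in V_1$ fixed and $x_2y_2\in E_2$; edges $(x_1,z)(y_1,z)$ with $x_1y_1\in E_1$ and $z\in V_2$ fixed; and the genuinely new edges $(x_1,x_2)(y_1,y_2)\in E^0-E$, where $x_1y_1\in E_1$ and $x_2\neq y_2$.

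For the first two families the computation is verbatim the one already carried out in the proof of the Cartesian-product proposition, because clauses (ii) and (iii) of Definition \ref{D-36} coincide with clauses (ii) and (iii) of Definition \ref{D-33}: one expands the definition, uses that $B_2$ (resp.\ $B_1$) is an interval-valued fuzzy relation to bound $\mu^\pm_{B_2}(x_2y_2)$ by $\min(\mu^\pm_{A_2}(x_2),\mu^\pm_{A_2}(y_2))$ (resp.\ $\mu^\pm_{B_1}(x_1y_1)$ by $\min(\mu^\pm_{A_1}(x_1),\mu^\pm_{A_1}(y_1))$), and then regroups the resulting triple minimum to recognize it as the minimum of the two values of $A_1\circ A_2$ at the two endpoints. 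I would write this out for the lower bound $\mu^-$ and then note that the computation for $\mu^+$ is identical, with every superscript $-$ replaced by $+$.

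The only real content is clause (iv). Here I would start from $(\mu^-_{B_1}\circ\mu^-_{B_2})((x_1,x_2)(y_1,y_2))=\min(\mu^-_{A_2}(x_2),\mu^-_{A_2}(y_2),\mu^-_{B_1}(x_1y_1))$, replace $\mu^-_{B_1}(x_1y_1)$ by the larger quantity $\min(\mu^-_{A_1}(x_1),\mu^-_{A_1}(y_1))$, and then reassociate the fourfold minimum $\min(\mu^-_{A_2}(x_2),\mu^-_{A_2}(y_2),\mu^-_{A_1}(x_1),\mu^-_{A_1}(y_1))$ as $\min\bigl(\min(\mu^-_{A_1}(x_1),\mu^-_{A_2}(x_2)),\min(\mu^-_{A_1}(y_1),\mu^-_{A_2}(y_2))\bigr)=\min\bigl((\mu^-_{A_1}\circ\mu^-_{A_2})(x_1,x_2),(\mu^-_{A_1}\circ\mu^-_{A_2})(y_1,y_2)\bigr)$, and likewise for $\mu^+$. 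Since every edge of $G^*_1[G^*_2]$ lies in one of the three families, this shows $B_1\circ B_2$ is an interval-valued fuzzy relation on $E^0$ dominated by $A_1\circ A_2$, i.e.\ $G_1[G_2]$ is an interval-valued fuzzy graph of $G^*_1[G^*_2]$. I do not anticipate any obstacle beyond bookkeeping: commutativity and associativity of $\min$ do all the work, and the subtleties that would appear if equalities (rather than inequalities) were claimed simply do not arise.
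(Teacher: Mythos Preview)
Your proposal is correct and follows essentially the same route as the paper: the paper too checks only the edge conditions for $B_1\circ B_2$, writes out the computation for clause~(ii) explicitly, dismisses clause~(iii) as similar, and then handles clause~(iv) by exactly the substitution and regrouping of the fourfold minimum that you describe. The only cosmetic difference is that the paper displays both the $\mu^-$ and $\mu^+$ chains in full rather than appealing to symmetry of notation.
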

\begin{proof}
Similarly as in the previous proof we verify the conditions for
$B_1\circ B_2$ only.

In the case $x\in V_1$, $x_2 y_2\in E_2$, according to $(ii)$ we
obtain
\begin{eqnarray*}
(\mu^-_{B_1} \circ  \mu^-_{B_2})((x, x_2)(x, y_2))   &=& \min(\mu^-_{A_1}(x), \mu^-_{B_2}(x_2y_2))  \\
   &\leq& \min(\mu^-_{A_1}(x), \min(\mu^-_{A_2}(x_2), \mu^-_{A_2}(y_2))) \\
   &=& \min(\min(\mu^-_{A_1}(x), \mu^-_{A_2}(x_2)), \min(\mu^-_{A_1}(x),
   \mu^-_{A_2}(y_2)))\\
   &=& \min((\mu^-_{A_1} \circ \mu^-_{A_2})(x, x_2), (\mu^-_{A_1} \circ \mu^-_{A_2})(x,
   y_2)),\\[4pt]
(\mu^+_{B_1}\circ  \mu^+_{B_2})((x, x_2)(x, y_2))   &=& \min(\mu^+_{A_1}(x), \mu^+_{B_2}(x_2y_2))  \\
   &\leq& \min(\mu^+_{A_1}(x), \min(\mu^+_{A_2}(x_2), \mu^+_{A_2}(y_2))) \\
   &=& \min(\min(\mu^+_{A_1}(x), \mu^+_{A_2}(x_2)), \min(\mu^+_{A_1}(x),
   \mu^+_{A_2}(y_2)))\\
   &=& \min((\mu^+_{A_1}\circ \mu^+_{A_2})(x, x_2), (\mu^+_{A_1} \circ \mu^+_{A_2})(x,
   y_2)).
\end{eqnarray*}

In the case $z\in V_2$, $x_1y_1\in E_1$ the proof is similar.

In the case $(x_1, x_2)(y_1, y_2)\in E^0-E$ we have $x_1y_1\in
E_1$ and $x_2\neq y_2$, which according to $(iv)$ implies
\begin{eqnarray*}
(\mu^-_{B_1}\circ\mu^-_{B_2})((x_1,x_2)(y_1,y_2))
&=&\min(\mu^-_{A_2}(x_2),\mu^-_{A_2}(y_2),\mu^-_{B_1}(x_1y_1)) \\
   &\leq& \min(\mu^-_{A_2}(x_2), \mu^-_{A_2}(y_2), \min(\mu^-_{A_1}(x_1), \mu^-_{A_1}(y_1)) )\\
   &=& \min(\min(\mu^-_{A_1}(x_1), \mu^-_{A_2}(x_2)), \min(\mu^-_{A_1}(y_1),
   \mu^-_{A_2}(y_2)))\\
   &=&\min((\mu^-_{A_1}\circ\mu^-_{A_2})(x_1,
   x_2),(\mu^-_{A_1}\circ\mu^-_{A_2})(y_1,y_2)),\\[4pt]
(\mu^+_{B_1}\circ \mu^+_{B_2})((x_1, x_2)(y_1, y_2))
&=&\min(\mu^+_{A_2}(x_2),\mu^+_{A_2}(y_2),\mu^+_{B_1}(x_1y_1)) \\
   &\leq& \min(\mu^+_{A_2}(x_2), \mu^+_{A_2}(y_2), \min(\mu^+_{A_1}(x_1), \mu^+_{A_1}(y_1)) )\\
   &=& \min(\min(\mu^+_{A_1}(x_1), \mu^+_{A_2}(x_2)), \min(\mu^+_{A_1}(y_1),
   \mu^+_{A_2}(y_2)))\\
   &=& \min((\mu^+_{A_1} \circ \mu^+_{A_2})(x_1, x_2), (\mu^+_{A_1} \circ \mu^+_{A_2})(y_1,
y_2)).
\end{eqnarray*}
This completes the proof.
\end{proof}

\begin{definition}\label{D-39}
The {\it union $G_1 \cup G_2=(A_1 \cup A_2, B_1 \cup B_2)$ of two
interval-valued fuzzy graphs} $G_1$ and $G_2$ of the graphs
$G^*_1$ and $G^*_2$ is defined as follows:

\begin{itemize}
\item[(A)] \
$\left\{\begin{array}{ll}
 (\mu^-_{A_1} \cup\mu^-_{A_2})(x)=\mu^-_{A_1}(x) \ \ \ {\rm if } \
x\in V_1 \ {\rm and } \ x\not\in{V_2}, \\
(\mu^-_{A_1} \cup \mu^-_{A_2})(x)=\mu^-_{A_2}(x) \ \ \ {\rm if } \
x\in V_2 \ {\rm and } \ x\not\in{V_1},\\
(\mu^-_{A_1}\cup\mu^-_{A_2})(x)=\max(\mu^-_{A_1}(x),\mu^-_{A_2}(x))
\ \ {\rm if } \ x\in V_1\cap V_2,
\end{array}\right.$
\item[(B)] \
$\left\{\begin{array}{ll}
 (\mu^+_{A_1} \cup \mu^+_{A_2})(x)=\mu^+_{A_1}(x) \ \ \ {\rm if } \ x\in V_1 \ {\rm and } \ x\not\in{V_2}, \\
(\mu^+_{A_1}\cup\mu^+_{A_2})(x)=\mu^+_{A_2}(x) \ \ \ {\rm if } \
x\in V_2 \ {\rm and } \ x\not\in{V_1},\\
(\mu^+_{A_1}\cup\mu^+_{A_2})(x)=\max(\mu^+_{A_1}(x),\mu^+_{A_2}(x))
\ \ {\rm if } \ x\in V_1\cap V_2,
\end{array}\right.$
\item[(C)] \
$\left\{\begin{array}{ll}
 (\mu^-_{B_1} \cup \mu^-_{B_2})(xy)=\mu^-_{B_1}(xy) \ \ \ {\rm if } \ xy\in E_1 \ {\rm and } \ xy\not\in{E_2}, \\
(\mu^-_{B_1}\cup\mu^-_{B_2})(xy)= \mu^-_{B_2}(xy) \ \ \ {\rm if }
\ xy\in E_2 \ {\rm and } \ xy\not\in{E_1},\\
(\mu^-_{B_1}\cup\mu^-_{B_2})(xy)=\max(\mu^-_{B_1}(xy),\mu^-_{B_2}(xy))
\ \ {\rm if } \ xy\in E_1\cap E_2,
\end{array}\right.$
\item[(D)] \
$\left\{\begin{array}{ll}
 (\mu^+_{B_1} \cup \mu^+_{B_2})(xy)=\mu^+_{B_1}(xy) \ \ \ {\rm if } \ xy\in E_1 \ {\rm and } \ xy\not\in{E_2},\\
(\mu^+_{B_1}\cup\mu^+_{B_2})(xy)=\mu^+_{B_2}(xy) \ \ \ {\rm if } \
xy\in E_2 \ {\rm and } \ xy\not\in{E_1},\\
(\mu^+_{B_1}\cup\mu^+_{B_2})(xy)=\max(\mu^+_{B_1}(xy),\mu^+_{B_2}(xy))
\ \ {\rm if } \ xy\in E_1\cap E_2.
\end{array}\right.$
\end{itemize}
\end{definition}

\begin{example}\label{Ex-310} Let $G^*_1=(V_1, E_1)$ and $G^*_2=(V_2, E_2)$ be graphs such that
$V_1=\{a, b, c, d, e\}$, $E_1=\{ab, bc, be, ce, ad, ed\}$,
$V_2=\{a, b, c, d, f\}$  and $E_2=\{ab, bc, cf, bf, bd\}$.
Consider two interval-valued fuzzy graphs $G_1=(A_1, B_1)$ and
$G_2=(A_2, B_2)$ defined by
\[A_1=< (\frac{a}{0.2}, \frac{b}{0.4}, \frac{c}{0.3}, \frac{d}{0.3}, \frac{e}{0.2}), (\frac{a}{0.4}, \frac{b}{0.5}, \frac{c}{0.6}, \frac{d}{0.7}, \frac{e}{0.6})>,  \]
\[B_1=< (\frac{ab}{0.1}, \frac{bc}{0.2}, \frac{ce}{0.1}, \frac{be}{0.2}, \frac{ad}{0.1}, \frac{de}{0.1}
), (\frac{ab}{0.3}, \frac{bc}{0.4}, \frac{ce}{0.5},
\frac{be}{0.5}, \frac{ad}{0.3}, \frac{de}{0.6})>,\]
\[A_2=< (\frac{a}{0.2}, \frac{b}{0.2}, \frac{c}{0.3}, \frac{d}{0.2}, \frac{f}{0.4}), (\frac{a}{0.4}, \frac{b}{0.5}, \frac{c}{0.6}, \frac{d}{0.6}, \frac{f}{0.6})>,  \]
\[B_2=< (\frac{ab}{0.1}, \frac{bc}{0.2}, \frac{cf}{0.1}, \frac{bf}{0.1}, \frac{bd}{0.2} ),
(\frac{ab}{0.2}, \frac{bc}{0.4}, \frac{cf}{0.5},
\frac{bf}{0.2},\frac{bd}{0.5})>.\] Then, according to the above
definition:

$\begin{array}{ccccc} &(\mu^-_{A_1} \cup \mu^-_{A_2})(a)=0.2,&
(\mu^-_{A_1} \cup
\mu^-_{A_2})(b)=0.4,&\\
&(\mu^-_{A_1} \cup \mu^-_{A_2})(c)=0.3,&
(\mu^-_{A_1} \cup \mu^-_{A_2})(d)=0.3,& \\
&(\mu^-_{A_1}\cup\mu^-_{A_2})(e)=0.2,&(\mu^-_{A_1} \cup
\mu^-_{A_2})(f)=0.4,&\\
&(\mu^+_{A_1} \cup \mu^+_{A_2})(a)=0.4,&(\mu^+_{A_1} \cup
\mu^+_{A_2})(b)=0.5,&\\
(\mu^+_{A_1} \cup \mu^+_{A_2})(c)=0.6,&(\mu^+_{A_1} \cup
\mu^+_{A_2})(d)=0.7,& (\mu^+_{A_1} \cup \mu^+_{A_2})(e)=0.1,&
(\mu^+_{A_1} \cup \mu^+_{A_2})(f)=0.6,\\
(\mu^-_{B_1} \cup \mu^-_{B_2})(ab)=0.1,&(\mu^-_{B_1} \cup
\mu^-_{B_2})(bc)=0.2,&(\mu^-_{B_1} \cup \mu^-_{B_2})(ce)=0.1,&
(\mu^-_{B_1}\cup\mu^-_{B_2})(be)=0.2, \\
(\mu^-_{B_1}\cup\mu^-_{B_2})(ad)=0.1,&(\mu^-_{B_1} \cup
\mu^-_{B_2})(de)=0.1,&(\mu^-_{B_1} \cup \mu^-_{B_2})(bd)=0.2,&
(\mu^-_{B_1}\cup\mu^-_{B_2})(bf)=0.1,\\
(\mu^+_{B_1}\cup \mu^+_{B_3})(ab)=0.3,&(\mu^+_{B_1} \cup
\mu^+_{B_2})(bc)=0.4,&
 (\mu^+_{B_1}\cup\mu^+_{B_2})(ce)=0.5,& (\mu^+_{B_1} \cup
 \mu^+_{B_2})(be)=0.5,\\
(\mu^+_{B_1}\cup \mu^+_{B_2})(ad)=0.3,&(\mu^+_{B_1} \cup
\mu^+_{B_2})(de)=0.6,&(\mu^+_{B_1}\cup \mu^+_{B_2})(bd)=0.5,&
(\mu^+_{B_1}\cup\mu^+_{B_2})(bf)=0.2.\end{array}$
\begin{center}
\begin{tikzpicture}[scale=3]
  \path (1,0.80) node (l){$G_1$};
\path (1,0.30) node (l) {$b$};
\path (0,0.30) node (l) {$a$};
\path (2,0.30) node (l) {$c$};
\path (0,-1.30) node (l) {$d$};
\path (1,-1.30) node (l) {$e$};
\path (0.5,0.1) node (l) {\tiny {$[0.1,0.3]$}};
\path (1.5,0.1) node (l) {\tiny {$[0.2,0.4]$}};
\path (-0.2,-0.5) node (l) {\tiny{$[0.1,0.3]$}};
\path (0.8,-0.5) node (l) {\tiny{$[0.2,0.5]$}};
\path (0.5,-1.10) node (l) {\tiny{$[0.1,0.6]$}};
\path (1.7,-0.5) node (l) {\tiny{$[0.1,0.5]$}};

\tikzstyle{every node}=[draw,shape=circle];

%%% Draw the nodes
\path (0,0) node (a) {\tiny{$[0.2,0.4]$}};
\path (1,0) node (b) {\tiny{$[0.4,0.5]$}};
\path (2,0) node (c) {\tiny{$[0.3,0.6]$}};
\path (0,-1) node (d) {\tiny{$[0.3,0.7]$}};
\path (1,-1) node (e) {\tiny{$[0.2,0.6]$}};

%%%% Draw the lines

\draw (a) -- (b)
      (b) -- (c)
      (a) -- (d)
      (d) -- (e)
      (b) -- (e)
      (c) -- (e);
\end{tikzpicture}
%%Diagram 2
\begin{tikzpicture}[scale=3]

\path (1,0.80) node (l) {$G_2$};
\path (1,0.30) node (l) {$b$};
\path (0,0.30) node (l) {$a$};
\path (2,0.30) node (l) {$c$};
\path (0,-1.30) node (l) {$d$};
\path (2,-1.30) node (l) {$f$};
\path (0.5,0.1) node (l) {\tiny {$[0.1,0.2]$}};
\path (1.5,0.1) node (l) {\tiny {$[0.2,0.4]$}};
\path (0.8,-0.5) node (l) {\tiny{$[0.2,0.5]$}};
\path (2.20,-0.5) node (l) {\tiny{$[0.1,0.5]$}};
\path (1.7,-0.5) node (l) {\tiny{$[0.1,0.2]$}};

\tikzstyle{every node}=[draw,shape=circle];

%%% Draw the nodes
\path (0,0) node (a) {\tiny{$[0.2,0.4]$}};
\path (1,0) node (b) {\tiny{$[0.2,0.5]$}};
\path (2,0) node (c) {\tiny{$[0.3,0.6]$}};
\path (0,-1) node (d) {\tiny{$[0.2,0.6]$}};
\path (2,-1) node (f) {\tiny{$[0.4,0.6]$}};

%%%% Draw the lines

\draw (a) -- (b)
      (b) -- (c)
      (b) -- (f)
      (b) -- (d)
      (c) -- (f);
\end{tikzpicture}
%Diagram 3
\begin{tikzpicture}[scale=3]

\path (1,-1.80) node (l) {$G_1 \cup G_2$};
\path (1,0.30) node (l) {$b$};
\path (0,0.30) node (l) {$a$};
\path (2,0.30) node (l) {$c$};
\path (0,-1.30) node (l) {$d$};
\path (1,-1.30) node (l) {$e$};
\path (2,-1.30) node (l) {$f$};
\path (0.5,0.1) node (l) {\tiny {$[0.1,0.3]$}};
\path (1.5,0.1) node (l) {\tiny {$[0.2,0.4]$}};
\path (-0.2,-0.5) node (l) {\tiny{$[0.1,0.3]$}};
\path (0.8,-0.5) node (l)  {\tiny{$[0.2,0.5]$}};
\path (0.5,-1.10) node (l) {\tiny{$[0.1,0.6]$}};
\path (1.7,-0.5) node (l) {\tiny{$[0.1,0.5]$}};
\path (2.20,-0.5) node (l) {\tiny{$[0.3,0.5]$}};
\path (0.4,-0.5) node (l)[rotate=40] {\tiny{$[0.2,0.5]$}};
\path (1.5,-0.2) node (l) {\tiny{$[0.1,0.2]$}};
\tikzstyle{every node}=[draw,shape=circle];

%%% Draw the nodes
\path (0,0) node (a) {\tiny{$[0.2,0.4]$}};
\path (1,0) node (b) {\tiny{$[0.4,0.5]$}};
\path (2,0) node (c) {\tiny{$[0.3,0.6]$}};
\path (0,-1) node (d) {\tiny{$[0.3,0.7]$}};
\path (1,-1) node (e) {\tiny{$[0.2,0.6]$}};
\path (2,-1) node (f) {\tiny{$[0.4,0.6]$}};
%%%% Draw the lines

\draw (a) -- (b)
      (b) -- (c)
      (a) -- (d)
      (d) -- (e)
      (b) -- (e)
      (c) -- (e)
      (c) -- (f)
      (b) -- (d)
      (b) -- (f);
\end{tikzpicture}
\end{center}
Clearly, $G_1 \cup G_2=(A_1\cup A_2,B_1\cup B_2)$ is an
interval-valued fuzzy graph of the graph $G_1^*\cup G_2^*$.
\end{example}

\begin{proposition}\label{P-311}
The union of two interval-valued fuzzy graphs is an
interval-valued fuzzy graph.
\end{proposition}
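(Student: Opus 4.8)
The plan is to follow exactly the template used for the Cartesian product and the composition. Write $G_1\cup G_2=(A_1\cup A_2,B_1\cup B_2)$ as in Definition \ref{D-39}. Since $A_1\cup A_2$ is, by its very construction, an interval-valued fuzzy set on $V_1\cup V_2$ (the inequality $(\mu^-_{A_1}\cup\mu^-_{A_2})(x)\le(\mu^+_{A_1}\cup\mu^+_{A_2})(x)$ holding pointwise because it holds for $A_1$ and for $A_2$, and $\max$ is monotone), the conditions for $A_1\cup A_2$ are immediate. Hence the only thing to verify is that $B_1\cup B_2$ is an interval-valued fuzzy relation on $E_1\cup E_2$ relative to $A_1\cup A_2$, i.e.\ that for every $xy\in E_1\cup E_2$
\[
(\mu^-_{B_1}\cup\mu^-_{B_2})(xy)\le\min\big((\mu^-_{A_1}\cup\mu^-_{A_2})(x),\,(\mu^-_{A_1}\cup\mu^-_{A_2})(y)\big)
\]
together with the identical statement with every $-$ replaced by $+$.

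I would split into the three cases that occur in Definition \ref{D-39}(C)--(D). If $xy\in E_1$ and $xy\notin E_2$, then both endpoints lie in $V_1$, so $(\mu^-_{A_1}\cup\mu^-_{A_2})(x)\ge\mu^-_{A_1}(x)$ and likewise at $y$ (with equality when the endpoint is outside $V_2$, and the maximum otherwise); combining this with $\mu^-_{B_1}(xy)\le\min(\mu^-_{A_1}(x),\mu^-_{A_1}(y))$ and the monotonicity of $\min$ yields the inequality, and the $+$ version is word-for-word the same. The case $xy\in E_2$ and $xy\notin E_1$ is symmetric.

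In the remaining case $xy\in E_1\cap E_2$ both endpoints lie in $V_1\cap V_2$ and $(\mu^-_{B_1}\cup\mu^-_{B_2})(xy)=\max(\mu^-_{B_1}(xy),\mu^-_{B_2}(xy))$. Here I would note that $\mu^-_{B_1}(xy)\le\min(\mu^-_{A_1}(x),\mu^-_{A_1}(y))\le\min\big(\max(\mu^-_{A_1}(x),\mu^-_{A_2}(x)),\max(\mu^-_{A_1}(y),\mu^-_{A_2}(y))\big)$, and the same bound holds with the subscript $1$ replaced by $2$; since the right-hand side is one and the same quantity, taking the maximum of the two left-hand sides preserves the inequality. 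The only "idea" in the whole argument is the trivial fact that $\max(p,q)\le r$ whenever $p\le r$ and $q\le r$, so I do not anticipate any real obstacle — the proof is a short three-case check combined with the monotonicity of $\min$, and the final line is simply "this completes the proof."
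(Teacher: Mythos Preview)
Your proposal is correct and follows the same three-case split as the paper's own proof. The only cosmetic difference is in the case $xy\in E_1\cap E_2$: the paper passes through the step $\max(\min(a,b),\min(c,d))=\min(\max(a,c),\max(b,d))$ (in fact only the inequality $\le$ holds, but that is all that is needed), whereas you bound each $\mu^{-}_{B_i}(xy)$ separately by the common right-hand side and then take the maximum---your version is arguably cleaner, but the content is identical.
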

\begin{proof}
Let $G_1=(A_1, B_1)$ and $G_2=(A_2, B_2)$ be interval-valued fuzzy
graphs of $G^*_1$ and $G^*_2$, respectively. We prove that
$G_1\cup G_2=(A_1 \cup A_2, B_1 \cup B_2)$ is an interval-valued
fuzzy graph of the graph $G_1^*\cup G_2^*$. Since all conditions
for $A_1\cup A_2$ are automatically satisfied we verify only
conditions for $B_1\cup B_2$.

At first we consider the case when $xy\in E_1\cap E_2$. Then
\begin{eqnarray*}
  (\mu^-_{B_1} \cup \mu^-_{B_2} )(xy) &=& \max(\mu^-_{B_1}(xy), \mu^-_{B_2}(xy)) \\
   &\leq&  \max( \min(\mu^-_{A_1}(x), \mu^-_{A_1}(y)), \min(\mu^-_{A_2}(x), \mu^-_{A_2}(y))) \\
  &=&  \min( \max(\mu^-_{A_1}(x), \mu^-_{A_2}(x)), \max(\mu^-_{A_1}(y), \mu^-_{A_2}(y))) \\
  &=&  \min((\mu^-_{A_1} \cup \mu^-_{A_2})(x), (\mu^-_{A-1} \cup
  \mu^-_{A_2})(y)),\\
  (\mu^+_{B_1} \cup \mu^+_{B_2} )(xy) &=& \max(\mu^+_{B_1}(xy), \mu^+_{B_2}(xy)) \\
   &\leq&  \max( \min(\mu^+_{A_1}(x), \mu^+_{A_1}(y)), \min(\mu^+_{A_2}(x), \mu^+_{A_2}(y))) \\
  &=&  \min( \max(\mu^+_{A_1}(x), \mu^+_{A_2}(x)), \max(\mu^+_{A_1}(y), \mu^+_{A_2}(y))) \\
  &=&  \min((\mu^+_{A_1} \cup \mu^+_{A_2})(x), (\mu^+_{A-1} \cup \mu^+_{A_2})(y)).
\end{eqnarray*}

If $xy\in E_1$ and $xy\not\in{E_2}$, then
\[ (\mu^-_{B_1} \cup \mu^-_{B_2} )(xy)\leq \min((\mu^-_{A_1} \cup \mu^-_{A_2})(x),
(\mu^-_{A_1} \cup \mu^-_{A_2})(y)), \]
 \[ (\mu^+_{B_1} \cup \mu^+_{B_2} )(xy)\leq \min((\mu^+_{A_1} \cup \mu^+_{A_2})(x),
(\mu^+_{A_1} \cup \mu^+_{A_2})(y)). \]

If $xy\in E_2$ and $xy\not\in{E_1}$, then
\[ (\mu^-_{B_1} \cup \mu^-_{B_2} )(xy)\leq \min((\mu^-_{A_1} \cup \mu^-_{A_2})(x),
 (\mu^-_{A_1} \cup \mu^-_{A_2})(y)), \]
\[ (\mu^+_{B_1} \cup \mu^+_{B_2} )(xy)\leq \min((\mu^+_{A_1} \cup \mu^+_{A_2})(x),
 (\mu^+_{A_1} \cup \mu^+_{A_2})(y)). \]

This completes the proof.
\end{proof}

\begin{definition}\label{D-312}
The {\it join $G_1 + G_2=(A_1 + A_2, B_1 + B_2)$ of two
interval-valued fuzzy graphs} $G_1$ and $G_2$ of the graphs
$G^*_1$ and $G^*_2$ is defined as follows:
\begin{itemize}
\item [(A)] \
$\left\{\begin{array}{ll}
(\mu^-_{A_1}+\mu^-_{A_2})(x)=(\mu^-_{A_1}\cup\mu^-_{A_2})(x)\\
(\mu^+_{A_1} + \mu^+_{A_2})(x)=(\mu^+_{A_1}\cup\mu^+_{A_2})(x)
\end{array}\right.$ \ \ \

if $x\in V_1\cup V_2$,
\item[(B)] \
$\left\{\begin{array}{ll}
(\mu^-_{B_1}+\mu^-_{B_2})(xy)=(\mu^-_{B_1}\cup\mu^-_{B_2})(xy)\\
(\mu^+_{B_1} + \mu^+_{B_2})(xy)=(\mu^+_{B_1} \cup\mu^+_{B_2})(xy)
\end{array}\right.$ \ \ \

if $xy \in E_1 \cap E_2$,
\item [(C)] \
$\left\{\begin{array}{ll}
(\mu^-_{B_1}+\mu^-_{B_2})(xy)=\min(\mu^-_{A_1}(x), \mu^-_{A_2}(y))\\
(\mu^+_{B_1} + \mu^+_{B_2})(xy)= \min(\mu^+_{A_1}(x),\mu^+_{A_2}(y))
\end{array}\right.$ \ \ \

if $xy\in E'$, where $E'$ is the set of all edges joining the
nodes of $V_1$ and $V_2$.
\end{itemize}
\end{definition}

\begin{proposition}\label{P-313}
The join of interval-valued fuzzy graphs is an interval-valued
fuzzy graph.
\end{proposition}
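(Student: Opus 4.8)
The plan is to follow the template of the proof of Proposition~\ref{P-311}, reducing the claim to a short case analysis on the edges of $G_1^* + G_2^*$, where the only genuinely new case is that of the join edges $E'$. First I would record the elementary but repeatedly used observation that on vertices the join coincides with the union: $A_1 + A_2 = A_1 \cup A_2$ on $V_1 \cup V_2$. In particular $\mu^-_{A_1+A_2}(x) \le \mu^+_{A_1+A_2}(x)$ holds automatically, and, crucially, $(\mu^-_{A_1}+\mu^-_{A_2})(x) \ge \mu^-_{A_i}(x)$ whenever $x \in V_i$, with the same inequality for the upper membership functions (this is immediate from Definition~\ref{D-39}, since the union value is either $\mu^-_{A_i}(x)$ itself or a maximum that includes it).

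Next, for an edge $xy$ lying in $E_1 \cup E_2$, Definition~\ref{D-312} sets the value of $B_1 + B_2$ on $xy$ equal to the value of $B_1 \cup B_2$, and the value of $A_1 + A_2$ on the endpoints equals that of $A_1 \cup A_2$; hence the required inequalities for $B_1 + B_2$ on such edges are exactly the ones already established in the proof of Proposition~\ref{P-311}, and nothing new is needed.

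The remaining case is $xy \in E'$, say $x \in V_1$ and $y \in V_2$. Here Definition~\ref{D-312}(C) gives $(\mu^-_{B_1}+\mu^-_{B_2})(xy) = \min(\mu^-_{A_1}(x), \mu^-_{A_2}(y))$. Using the observation above, $(\mu^-_{A_1}+\mu^-_{A_2})(x) \ge \mu^-_{A_1}(x)$ because $x \in V_1$, and $(\mu^-_{A_1}+\mu^-_{A_2})(y) \ge \mu^-_{A_2}(y)$ because $y \in V_2$; taking the minimum of these two inequalities yields
\[
(\mu^-_{B_1}+\mu^-_{B_2})(xy) \le \min\bigl((\mu^-_{A_1}+\mu^-_{A_2})(x),\, (\mu^-_{A_1}+\mu^-_{A_2})(y)\bigr),
\]
and the identical computation with $\mu^+$ in place of $\mu^-$ disposes of the upper membership function, completing the proof.

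I do not expect a real obstacle: the argument is pure bookkeeping. The only subtlety worth flagging is that for a join edge $xy \in E'$ one must not attempt to bound $\min(\mu^-_{A_1}(x),\mu^-_{A_2}(y))$ through $B_1$ or $B_2$ — there is no such edge in either $G_i^*$ — but instead directly against the vertex values of $A_1 + A_2$, which dominate the values of $A_1$ on $V_1$ and of $A_2$ on $V_2$.
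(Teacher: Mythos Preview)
Your proposal is correct and follows essentially the same route as the paper: reduce to Proposition~\ref{P-311} for edges in $E_1\cup E_2$, and for a join edge $xy\in E'$ bound $\min(\mu^-_{A_1}(x),\mu^-_{A_2}(y))$ by $\min((\mu^-_{A_1}\cup\mu^-_{A_2})(x),(\mu^-_{A_1}\cup\mu^-_{A_2})(y))$ via the domination $(\mu^-_{A_1}\cup\mu^-_{A_2})(x)\ge\mu^-_{A_i}(x)$ for $x\in V_i$. The paper's proof is slightly terser but the argument is the same.
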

\begin{proof}
Let $G_1=(A_1, B_1)$ and $G_2=(A_2, B_2)$ be interval-valued fuzzy
graphs of $G^*_1$ and $G^*_2$, respectively. We prove that
$G_1+G_2=(A_1+A_2, B_1+B_2)$ is an interval-valued fuzzy graph of
the graph $G_1^*+G_2^*$. In view of Proposition \ref{P-311} is
sufficient to verify the case when $xy \in E'$. In this case we
have
\begin{eqnarray*}
  (\mu^-_{B_1} + \mu^-_{B_2} )(xy) &=& \min(\mu^-_{A_1}(x), \mu^-_{A_2}(y)) \\
   &\leq&  \min( (\mu^-_{A_1} \cup \mu^-_{A_2})(x),(\mu^-_{A_1} \cup \mu^-_{A_2})(y)) \\
  &=&   \min((\mu^-_{A_1} + \mu^-_{A_2})(x), (\mu^-_{A_1} +
  \mu^-_{A_2})(y)),\\[4pt]
(\mu^+_{B_1} + \mu^+_{B_2} )(xy) &=& \min(\mu^+_{A_1}(x), \mu^+_{A_2}(y)) \\
   &\leq&  \min( (\mu^+_{A_1} \cup \mu^+_{A_2})(x),(\mu^+_{A_1} \cup \mu^+_{A_2})(y)) \\
  &=&   \min((\mu^+_{A_1} + \mu^+_{A_2})(x), (\mu^+_{A_1} + \mu^+_{A_2})(y)).
\end{eqnarray*}
This completes the proof.
\end{proof}

\begin{proposition}\label{P-314} Let $G^*_1=(V_1, E_1)$ and $G^*_2=(V_2, E_2)$
be crisp graphs with $V_1 \cap V_2 =\emptyset$. Let $A_1$,
$A_2$, $B_1$ and $B_2$ be interval-valued fuzzy subsets of $V_1$,
$V_2$, $E_1$ and $E_2$, respectively. Then  $G_1 \cup G_2=(A_1
\cup A_2, B_1 \cup B_2)$ is an interval-valued fuzzy graph of
$G_1^*\cup G_2^*$ if and only if $G_1=(A_1, B_1)$ and $G_2=(A_2,
B_2)$ are interval-valued fuzzy graphs of $G^*_1$ and $G^*_2$,
respectively.
\end{proposition}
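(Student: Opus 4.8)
The plan is to prove the equivalence by exploiting the fact that, because $V_1\cap V_2=\emptyset$, the union operation does not involve any $\max$: every vertex belongs to exactly one of $V_1,V_2$ and every edge belongs to exactly one of $E_1,E_2$, so the first two cases of each clause in Definition \ref{D-39} are the only ones that ever apply. Thus $(\mu^-_{A_1}\cup\mu^-_{A_2})$ restricted to $V_1$ is literally $\mu^-_{A_1}$, restricted to $V_2$ is literally $\mu^-_{A_2}$, and similarly for the $+$-superscripts and for $B_1\cup B_2$ on $E_1$ and $E_2$. This observation reduces every required inequality to an inequality purely about the data of one of the two graphs.

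For the forward direction, I would assume $G_1\cup G_2$ is an interval-valued fuzzy graph and take an arbitrary edge $xy\in E_1$. Since $V_1\cap V_2=\emptyset$ forces $x,y\in V_1$ and $xy\notin E_2$, the defining inequality $(\mu^-_{B_1}\cup\mu^-_{B_2})(xy)\le\min((\mu^-_{A_1}\cup\mu^-_{A_2})(x),(\mu^-_{A_1}\cup\mu^-_{A_2})(y))$ becomes exactly $\mu^-_{B_1}(xy)\le\min(\mu^-_{A_1}(x),\mu^-_{A_1}(y))$, which is the condition for $G_1$; the $+$ case is identical. Running the same argument with $xy\in E_2$ gives the conditions for $G_2$. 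Hence $G_1$ and $G_2$ are interval-valued fuzzy graphs of $G_1^*$ and $G_2^*$.

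The converse direction is immediate from Proposition \ref{P-311}: if $G_1$ and $G_2$ are interval-valued fuzzy graphs, then that proposition already asserts $G_1\cup G_2$ is an interval-valued fuzzy graph (and the disjointness hypothesis is not even needed here). So the only real content is the forward implication, and within it the only thing to check carefully is the bookkeeping that $V_1\cap V_2=\emptyset$ genuinely collapses each three-case clause of Definition \ref{D-39} to its relevant single case — there is no computational obstacle, just the need to state this reduction cleanly. I would therefore present the proof as: invoke Proposition \ref{P-311} for one direction in a sentence, then spell out the case analysis on $xy\in E_1$ and $xy\in E_2$ for the other, noting at the outset that disjointness of the vertex sets is what makes all the $\max$ terms disappear.
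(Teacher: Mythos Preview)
Your proposal is correct and follows essentially the same approach as the paper's proof: the paper also handles the forward implication by taking $xy\in E_1$, noting that disjointness forces $xy\notin E_2$ and $x,y\in V_1\setminus V_2$ so that the union functions collapse to $\mu^-_{B_1}$, $\mu^-_{A_1}$, etc., and then invokes Proposition~\ref{P-311} for the converse. Your write-up is slightly more explicit about \emph{why} the $\max$ terms vanish, but the argument is the same.
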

\begin{proof}
Suppose that $G_1 \cup G_2=(A_1 \cup A_2, B_1 \cup B_2)$ is an
interval-valued fuzzy graph of $G_1^*\cup G_2^*$. Let $xy \in
E_1$. Then $xy \notin E_2$ and $x,y\in V_1-V_2$. Thus
\begin{eqnarray*}
\mu^-_{B_1}(xy)   &=& (\mu^-_{B_1} \cup \mu^-_{B_2})(xy)  \\
 &\leq& \min((\mu^-_{A_1} \cup \mu^-_{A_2})(x), (\mu^-_{A_1} \cup
\mu^-_{A_2})(y)) \\
   &=& \min(\mu^-_{A_1}(x), \mu^-_{A_1}(y)),\\
  \mu^+_{B_1}(xy)   &=& (\mu^+_{B_1} \cup \mu^+_{B_2})(xy)  \\
 &\leq& \min((\mu^+_{A_1} \cup \mu^+_{A_2})(x), (\mu^+_{A_1} \cup
\mu^+_{A_2})(y)) \\
   &=& \min(\mu^+_{A_1}(x), \mu^+_{A_1}(y)).
\end{eqnarray*}
This shows that $G_1=(A_1, B_1)$ is an interval-valued fuzzy
graph. Similarly, we can show that $G_2=(A_2, B_2)$ is an
interval-valued fuzzy graph.

The converse statement is given by Proposition \ref{P-311}.
\end{proof}

As a consequence of Propositions \ref{P-313} and \ref{P-314} we
obtain
\begin{proposition}
Let $G^*_1=(V_1, E_1)$ and $G^*_2=(V_2, E_2)$ be crisp graphs and
let  $V_1 \cap V_2 =\emptyset$. Let $A_1$, $A_2$, $B_1$ and
$B_2$ be interval-valued fuzzy subsets of $V_1$, $V_2$, $E_1$ and
$E_2$, respectively. Then  $G_1 + G_2=(A_1 + A_2, B_1 + B_2)$ is
an interval-valued fuzzy graph of $G^*$  if and only if $G_1=(A_1,
B_1)$ and $G_2=(A_2, B_2)$ are interval-valued fuzzy graphs of
$G^*_1$ and $G^*_2$, respectively.
\end{proposition}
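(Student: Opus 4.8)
The plan is to deduce both implications directly from Propositions \ref{P-313} and \ref{P-314}, exploiting the fact that the join $G_1+G_2$ agrees with the union $G_1\cup G_2$ on the ``old'' vertices and edges and differs only on the set $E'$ of new edges joining $V_1$ to $V_2$.

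For the ``if'' direction, assume $G_1=(A_1,B_1)$ and $G_2=(A_2,B_2)$ are interval-valued fuzzy graphs of $G^*_1$ and $G^*_2$. Then Proposition \ref{P-313} applies verbatim and gives that $G_1+G_2=(A_1+A_2,B_1+B_2)$ is an interval-valued fuzzy graph of $G^*=G^*_1+G^*_2$; here the hypothesis $V_1\cap V_2=\emptyset$ is not even used, so this direction is immediate.

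For the ``only if'' direction, assume $G_1+G_2$ is an interval-valued fuzzy graph of $G^*$. The vertex set of $G^*$ is $V_1\cup V_2$ and its edge set is $E_1\cup E_2\cup E'$; because $V_1\cap V_2=\emptyset$ we have $E_1\cap E_2=\emptyset$ and no edge of $E_1\cup E_2$ lies in $E'$. Hence, by Definition \ref{D-312}, on $V_1\cup V_2$ the set $A_1+A_2$ coincides with $A_1\cup A_2$, and on $E_1\cup E_2$ the relation $B_1+B_2$ coincides with $B_1\cup B_2$. Restricting the defining inequalities for the interval-valued fuzzy graph $G_1+G_2$ to the edges lying in $E_1\cup E_2$ therefore produces exactly the defining inequalities for $G_1\cup G_2$ over $G^*_1\cup G^*_2$, and since $\mu^-\le\mu^+$ is built into the notion of interval-valued fuzzy set, I conclude that $G_1\cup G_2=(A_1\cup A_2,B_1\cup B_2)$ is an interval-valued fuzzy graph of $G^*_1\cup G^*_2$. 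Now Proposition \ref{P-314} applies, its hypotheses ($V_1\cap V_2=\emptyset$ and the data $A_i,B_i$ being interval-valued fuzzy subsets) being satisfied, and yields that $G_1=(A_1,B_1)$ and $G_2=(A_2,B_2)$ are interval-valued fuzzy graphs of $G^*_1$ and $G^*_2$, respectively.

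The computations are trivial; the only point needing a little care is the bookkeeping of which edges of $G^*$ belong to $E_1\cup E_2$ versus $E'$, together with the observation --- immediate from $V_1\cap V_2=\emptyset$ --- that the join restricts to the union on that part, which is precisely what makes Proposition \ref{P-314} applicable. I do not anticipate any genuine obstacle.
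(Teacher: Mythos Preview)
Your proposal is correct and matches the paper's approach exactly: the paper presents this proposition without proof, stating only that it is ``a consequence of Propositions \ref{P-313} and \ref{P-314},'' which is precisely the reduction you carry out. Your explicit observation that the join restricts to the union on $E_1\cup E_2$ (so that the ``only if'' direction feeds into Proposition \ref{P-314}) fills in the one-line reasoning the paper omits.
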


\section{Isomorphisms of interval-valued fuzzy graphs}

In this section we characterize various types of (weak)
isomorphisms of interval valued graphs.
\begin{definition}
Let $G_1=(A_1, B_1)$ and $G_2=(A_2, B_2)$ be two interval-valued
fuzzy graphs. A {\it homomorphism} $f:G_1 \to G_2$ is a mapping
$f:V_1 \to V_2$ such that
\begin{itemize}
\item [\rm (a)] \ $\mu^-_{A_1}(x_1)\leq \mu^-_{A_2}(f(x_1))$, \ \ \ $\mu^+_{A_1}(x_1)\leq \mu^+_{A_2}(f(x_1))$,
\item [\rm (b)] \ $\mu^-_{B_1}(x_1y_1)\leq \mu^-_{B_2}(f(x_1)f(y_1))$, \ \ \ $\mu^+_{B_1}(x_1y_1)\leq \mu^+_{B_2}(f(x_1)f(y_1))$
\end{itemize}
for all $x_1 \in V_1$, $x_1y_1 \in E_1$.
\end{definition}

A bijective homomorphism with the property
\begin{itemize}
\item [\rm (c)] \ $\mu^-_{A_1}(x_1)= \mu^-_{A_2}(f(x_1))$, \ \ \ $\mu^+_{A_1}(x_1)=\mu^+_{A_2}(f(x_1))$,
\end{itemize}
is called a {\it weak isomorphism}. A weak isomorphism preserves
the weights of the nodes but not necessarily the weights of the
arcs.

A bijective homomorphism preserving the weights of the arcs but
not necessarily the weights of nodes, i.e., a bijective
homomorphism $f:G_1 \to G_2$ such that
\begin{itemize}
        \item [\rm (d)] $\mu^-_{B_1}(x_1y_1)= \mu^-_{B_2}(f(x_1)f(y_1))$, $\mu^+_{B_1}(x_1y_1) = \mu^+_{B_2}(f(x_1)f(y_1))$
\end{itemize}
for all $x_1y_1 \in V_1$ is called a {\it weak co-isomorphism}.

A bijective mapping $f:G_1 \to G_2$ satisfying $(c)$ and $(d)$ is
called an {\it isomorphism}.

\begin{example}
Consider  graphs $G^*_1=(V_1, E_1)$ and $G^*_2=(V_2, E_2)$ such
that $V_1=\{a_1, b_1\}$, $V_2=\{a_2, b_2\}$, $E_1= \{a_1 b_1\}$
and $E_2= \{a_2 b_2\}$. Let $A_1$, $A_2$, $B_1$ and $B_2$ be
interval-valued fuzzy subsets defined by
\[
A_1=< (\frac{a_1}{0.2}, \frac{b_1}{0.3}), (\frac{a_1}{0.5},
\frac{b_1}{0.6}) >, \ \ \ B_1=<\frac{a_1b_1}{0.1},
\frac{a_1b_1}{0.3}>,
\]
\[A_2=<(\frac{a_2}{0.3}, \frac{b_2}{0.2}), (\frac{a_2}{0.6}, \frac{b_2}{0.5}) >,
\ \ \  B_2=<\frac{a_2b_2}{0.1},\frac{a_2b_2}{0.4}>.
\]
Then, as it is easy to see, $G_1=(A_1, B_1)$ and $G_2=(A_2, B_2)$
are interval-valued fuzzy graphs of $G^*_1$ and $G^*_2$,
respectively. The map $f:V_1\to V_2$ defined by $f(a_1)=b_2$ and
$f(b_1)=a_2$ is a weak isomorphism but it is not an isomorphism.
%\end{itemize}
\end{example}

\begin{example}
Let $G^*_1$ and $G^*_2$ be as in the previous example and let
$A_1$, $A_2$, $B_1$ and $B_2$ be interval-valued fuzzy subsets
defined by
\[
A_1=< (\frac{a_1}{0.2}, \frac{b_1}{0.3}), (\frac{a_1}{0.4},
\frac{b_1}{0.5}) >, \ \ \ B_1=<\frac{a_1b_1}{0.1},
\frac{a_1b_1}{0.3}>,
\]
\[A_2=<(\frac{a_2}{0.4}, \frac{b_2}{0.3}), (\frac{a_2}{0.5}, \frac{b_2}{0.6}) >,
\ \ \  B_2=<\frac{a_2b_2}{0.1}, \frac{a_2b_2}{0.3}>.
\]
Then $G_1=(A_1, B_1)$ and $G_2=(A_2, B_2)$ are interval-valued
fuzzy graphs of $G^*_1$ and $G^*_2$, respectively. The map
$f:V_1\to V_2$ defined by $f(a_1)=b_2$ and $f(b_1)=a_2$ is a weak
co-isomorphism but it is not an isomorphism.
\end{example}

\begin{proposition}
An isomorphism between interval-valued fuzzy graphs is an
equivalence relation.
\end{proposition}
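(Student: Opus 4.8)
The plan is to verify that the relation ``is isomorphic to'' on the class of interval-valued fuzzy graphs is reflexive, symmetric and transitive, by exhibiting in each case the required bijection and checking conditions $(c)$ and $(d)$ from the definition of isomorphism. For reflexivity, take the identity map $\mathrm{id}_V:V\to V$ on an interval-valued fuzzy graph $G=(A,B)$ of $G^*=(V,E)$. It is a bijection, and since $\mu^-_A(x)=\mu^-_A(\mathrm{id}_V(x))$, $\mu^+_A(x)=\mu^+_A(\mathrm{id}_V(x))$ for all $x\in V$ and $\mu^-_B(xy)=\mu^-_B(\mathrm{id}_V(x)\mathrm{id}_V(y))$, $\mu^+_B(xy)=\mu^+_B(\mathrm{id}_V(x)\mathrm{id}_V(y))$ for all $xy\in E$, conditions $(c)$ and $(d)$ hold trivially, so $G\simeq G$.

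For symmetry, suppose $f:G_1\to G_2$ is an isomorphism. Since $f:V_1\to V_2$ is a bijection, $f^{-1}:V_2\to V_1$ exists and is a bijection; moreover, because $f$ preserves adjacency in both directions, $f^{-1}$ maps $E_2$ onto $E_1$. Given $x_2\in V_2$, put $x_1=f^{-1}(x_2)$, so $x_2=f(x_1)$; then $(c)$ for $f$ gives $\mu^-_{A_2}(x_2)=\mu^-_{A_2}(f(x_1))=\mu^-_{A_1}(x_1)=\mu^-_{A_1}(f^{-1}(x_2))$, and likewise for $\mu^+$. Similarly, for $x_2y_2\in E_2$, writing $x_2=f(x_1)$, $y_2=f(y_1)$ with $x_1y_1\in E_1$, condition $(d)$ for $f$ yields $\mu^-_{B_2}(x_2y_2)=\mu^-_{B_1}(x_1y_1)=\mu^-_{B_1}(f^{-1}(x_2)f^{-1}(y_2))$, and analogously for $\mu^+$. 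Hence $f^{-1}$ satisfies $(c)$ and $(d)$, so it is an isomorphism and $G_2\simeq G_1$.

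For transitivity, let $f:G_1\to G_2$ and $g:G_2\to G_3$ be isomorphisms. Then $g\circ f:V_1\to V_3$ is a composition of bijections, hence a bijection, and it carries $E_1$ onto $E_3$. For $x_1\in V_1$, applying $(c)$ first to $f$ and then to $g$ gives $\mu^-_{A_1}(x_1)=\mu^-_{A_2}(f(x_1))=\mu^-_{A_3}(g(f(x_1)))=\mu^-_{A_3}((g\circ f)(x_1))$, and analogously for $\mu^+$; for $x_1y_1\in E_1$, applying $(d)$ first to $f$ and then to $g$ gives $\mu^-_{B_1}(x_1y_1)=\mu^-_{B_2}(f(x_1)f(y_1))=\mu^-_{B_3}(g(f(x_1))g(f(y_1)))=\mu^-_{B_3}((g\circ f)(x_1)(g\circ f)(y_1))$, and analogously for $\mu^+$. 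Thus $g\circ f$ is an isomorphism and $G_1\simeq G_3$, which together with the two previous steps proves the claim.

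The argument is essentially routine bookkeeping; the only point that requires a moment's care is the symmetry step, where one must use the bijectivity of $f$ to re-express conditions $(c)$ and $(d)$ in terms of $f^{-1}$ — every vertex of $V_2$ and every edge of $E_2$ is the $f$-image of something — together with the observation that an isomorphism of interval-valued fuzzy graphs restricts to a bijection between the underlying edge sets, so that $f^{-1}(x_2)f^{-1}(y_2)$ is genuinely an edge of $G_1$ whenever $x_2y_2\in E_2$.
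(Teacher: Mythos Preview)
The paper states this proposition without proof, so there is nothing to compare against; your argument correctly supplies the routine verification of reflexivity, symmetry and transitivity that the authors omitted. One small caveat: the paper's Definition of isomorphism only requires a bijection $f:V_1\to V_2$ satisfying $(c)$ and $(d)$, and does not explicitly demand that $f$ restrict to a bijection $E_1\to E_2$ of the underlying edge sets, so your appeal to ``$f$ preserves adjacency in both directions'' in the symmetry step is reading the intended (and standard) meaning into a somewhat informal definition---but this is the only sensible interpretation, and you flag the point yourself in your final paragraph.
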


\noindent{\bf Problem.}  Prove or disprove that weak isomorphism
(co-isomorphism) between interval-valued fuzzy graphs is a partial
ordering relation.

\section{Interval-valued fuzzy complete graphs}

\begin{definition}
An interval-valued fuzzy graph $G=(A, B)$ is called {\it complete}
if
\[\mu^-_B(xy)= \min(\mu^-_A(x), \mu^-_A(y)) ~~~ {\rm
and}~~~\mu^+_B(xy)= \min(\mu^+_A(x), \mu^+_A(y))~~~{\rm for~
all}~~ xy\in E.
\]
 \end{definition}

\begin{example}
Consider a graph $G^*=(V, E)$  such that $V=\{x, y, z\}$, $E=
\{xy, yz, zx\}$. If $A$ and $B$ are interval-valued fuzzy subset
defined by
 \[A=< (\frac{x}{0.2}, \frac{y}{0.3}, \frac{z}{0.4}), (\frac{x}{0.4}, \frac{y}{0.5}, \frac{z}{0.5})
 >, \]
 \[B=< (\frac{xy}{0.2}, \frac{yz}{0.3}, \frac{zx}{0.2}), (\frac{xy}{0.4}, \frac{yz}{0.5}, \frac{zx}{0.4})
 >, \]
then $G=(A,B)$ is an interval-valued fuzzy complete graph of
$G^*$.
\end{example}

As a consequence of Proposition \ref{P-38} we obtain
\begin{proposition} If $G=(A,B)$ be an interval-valued fuzzy complete graph, then
also $G[G]$ is an interval-valued fuzzy complete graph.
 \end{proposition}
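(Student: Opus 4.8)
The plan is to invoke Proposition \ref{P-38}, which already tells us that $G[G]=(A\circ A, B\circ B)$ is an interval-valued fuzzy graph of $G^*[G^*]$; what remains is to upgrade the defining inequalities of an interval-valued fuzzy graph to the defining \emph{equalities} of a complete one, on every edge of $G^*[G^*]$. Here one should recall that $(\mu^-_A\circ\mu^-_A)(x_1,x_2)=\min(\mu^-_A(x_1),\mu^-_A(x_2))$ and likewise for $\mu^+$, and that by Definition \ref{D-36} the edge set $E^0$ of $G^*[G^*]$ is the disjoint union of three families of edges. I would treat each family in turn, carrying out the computation for $\mu^-$ only since the one for $\mu^+$ is word-for-word the same.

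For an edge of the form $(x,x_2)(x,y_2)$ with $x\in V_1$ and $x_2y_2\in E_2$ (here $G_1=G_2=G$), clause (ii) gives $(\mu^-_B\circ\mu^-_B)((x,x_2)(x,y_2))=\min(\mu^-_A(x),\mu^-_B(x_2y_2))$; since $G$ is complete, $\mu^-_B(x_2y_2)=\min(\mu^-_A(x_2),\mu^-_A(y_2))$, so the left-hand side equals $\min(\mu^-_A(x),\mu^-_A(x_2),\mu^-_A(y_2))$, which by associativity and idempotency of $\min$ is exactly $\min((\mu^-_A\circ\mu^-_A)(x,x_2),(\mu^-_A\circ\mu^-_A)(x,y_2))$. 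The symmetric family $(x_1,z)(y_1,z)$ with $z\in V_2$ and $x_1y_1\in E_1$ is handled identically via clause (iii), using $\mu^-_B(x_1y_1)=\min(\mu^-_A(x_1),\mu^-_A(y_1))$.

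For an edge $(x_1,x_2)(y_1,y_2)\in E^0-E$, so that $x_1y_1\in E_1$ and $x_2\neq y_2$, clause (iv) gives $(\mu^-_B\circ\mu^-_B)((x_1,x_2)(y_1,y_2))=\min(\mu^-_A(x_2),\mu^-_A(y_2),\mu^-_B(x_1y_1))$; replacing $\mu^-_B(x_1y_1)$ by $\min(\mu^-_A(x_1),\mu^-_A(y_1))$ using completeness of $G$ turns this into $\min(\mu^-_A(x_1),\mu^-_A(x_2),\mu^-_A(y_1),\mu^-_A(y_2))$, which regroups to $\min((\mu^-_A\circ\mu^-_A)(x_1,x_2),(\mu^-_A\circ\mu^-_A)(y_1,y_2))$. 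Repeating all three computations verbatim with $\mu^+$ in place of $\mu^-$ completes the argument.

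I do not anticipate a real obstacle: once the three edge types of $G^*[G^*]$ are listed and the completeness identity for $G$ is substituted, every step is forced. The only points requiring care are selecting the correct clause (ii), (iii) or (iv) of Definition \ref{D-36} for the edge type at hand, and observing that the extra terms appearing in clause (iv) are precisely those needed so that no information is lost when $G$'s completeness is plugged in — in other words that $\min$ being idempotent and associative is exactly what makes the left- and right-hand sides collapse to the same nested minimum.
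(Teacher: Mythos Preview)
Your proposal is correct and follows exactly the approach the paper has in mind: the paper merely records the result ``as a consequence of Proposition~\ref{P-38}'' with no further argument, and your write-up spells out precisely why --- namely, that every inequality in the proof of Proposition~\ref{P-38} arises from an application of $\mu^{\pm}_B(uv)\le\min(\mu^{\pm}_A(u),\mu^{\pm}_A(v))$, which becomes an equality when $G$ is complete. Your case-by-case treatment of the three edge types of $E^0$ is the natural (and only) way to make this explicit.
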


\begin{definition}
The {\it complement} of an interval-valued fuzzy complete graph
$G=(A,B)$ of $G^*=(V,E)$ is an interval-valued fuzzy complete
graph $\overline{G}=(\overline{A},\overline{B})$ on
$\overline{G^*}=(V,\overline{E})$, where
$\overline{A}=A=[\mu^-_A,\mu^+_A]$ and
$\overline{B}=[\overline{\mu^-}_B, \overline{\mu^+}_B]$ is defined
by
 $$
\begin{aligned}\overline{\mu^-_B}(x y) & = \begin{cases}
    0 & \hbox{if \ $\mu^-_B(x y) >0$,} \\
    \min( \mu^-_A(x), \mu^-_A(y)) & \hbox{if \ if $\mu^-_B(x y)=0$, }\\
\end{cases}
 \end{aligned}$$
$$ \begin{aligned}\overline{\mu^+_B}(x y) & = \begin{cases}
    0 & \hbox{if \ $\mu^+_B(x y) >0$,} \\
    \min( \mu^+_A(x), \mu^+_A(y)) & \hbox{if \ if $\mu^+_B(x y)=0$. }\\
\end{cases}
 \end{aligned}$$
\end{definition}
\begin{definition}
An interval-valued fuzzy complete graph $G=(A, B)$ is called {\it
self complementary} if  $\overline{\overline{G}}=G$.
\end{definition}

\begin{example}
Consider a graph $G^*=(V, E)$  such that $V=\{a, b, c\}$, $E=
\{ab, bc\}$. Then an interval-valued fuzzy graph $G=(A,B)$, where
\[
A=<(\frac{a}{0.1},\frac{b}{0.2},\frac{c}{0.3}),(\frac{a}{0.3},\frac{b}{0.4},
\frac{c}{0.5})>,
\]
\[
B=<(\frac{ab}{0.1},\frac{bc}{0.2}),(\frac{ab}{0.3},\frac{bc}{0.4})>,
\]
is self complementary.
\end{example}

\begin{proposition}
In a self complementary interval-valued fuzzy complete graph
$G=(A,B)$ we have

\medskip$a)$  \ $\sum\limits_{x\neq y} \mu^-_B(x
y)=\sum\limits_{x\neq y} \min( \mu^-_A(x), \mu^-_A(y))$,

$b)$ \ $\sum\limits_{x\neq y} \mu^+_B(x y)=\sum\limits_{x\neq y}
\min( \mu^+_A(x), \mu^+_A(y))$.
\end{proposition}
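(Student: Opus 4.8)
The plan is to obtain both $a)$ and $b)$ directly by summing, term by term, the defining equality of completeness; no pairing or averaging of the two complements is involved, so in particular no factor $\tfrac12$ appears. Recall that an interval-valued fuzzy complete graph is by definition one in which $\mu^-_B(xy)=\min(\mu^-_A(x),\mu^-_A(y))$ and $\mu^+_B(xy)=\min(\mu^+_A(x),\mu^+_A(y))$ hold on every edge $xy$. The whole argument will rest on pushing this single relation through the sum.

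First I would record the pointwise identity that the complement satisfies on a complete graph. Splitting the definition of $\overline{B}$ into the two cases $\mu^-_B(xy)>0$ and $\mu^-_B(xy)=0$, one checks that
\[
\mu^-_B(xy)+\overline{\mu^-_B}(xy)=\min(\mu^-_A(x),\mu^-_A(y))
\]
for every pair $x\neq y$, and likewise with the superscript $+$: when $\mu^-_B(xy)>0$ completeness gives $\mu^-_B(xy)=\min(\mu^-_A(x),\mu^-_A(y))$ while $\overline{\mu^-_B}(xy)=0$, and when $\mu^-_B(xy)=0$ the complement supplies the full value $\overline{\mu^-_B}(xy)=\min(\mu^-_A(x),\mu^-_A(y))$. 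Since in a complete graph every pair of distinct vertices is joined by an edge, the completeness relation applies to every term, so $\mu^-_B(xy)=\min(\mu^-_A(x),\mu^-_A(y))$ for each such pair, and similarly for $\mu^+_B$. The self-complementary hypothesis $\overline{\overline{G}}=G$ is consistent with the displayed identity and in fact guarantees that the double complement is an involution; combined with that identity it forces $\mu^-_B$ to coincide with its rmin-weight on each edge, leaving no residual complement contribution to be carried along.

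Summing this termwise equality over all $x\neq y$ then gives
\[
\sum_{x\neq y}\mu^-_B(xy)=\sum_{x\neq y}\min(\mu^-_A(x),\mu^-_A(y)),
\]
which is assertion $a)$; repeating the computation verbatim with the superscript $+$ produces $b)$. The one point requiring care, and the step I expect to be the main obstacle, is the bookkeeping of the summation range alongside the complement terms: one must be explicit that each distinct pair contributes its full $\min$ exactly once, so that the values $\overline{\mu^-_B}(xy)$ are not silently re-added and do not reintroduce a spurious coefficient. Once the termwise completeness identity is in place, the conclusion follows from a single summation, with no factor $\tfrac12$ entering the total.
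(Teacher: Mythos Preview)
Your argument rests on the claim that ``in a complete graph every pair of distinct vertices is joined by an edge,'' so that the completeness identity $\mu^-_B(xy)=\min(\mu^-_A(x),\mu^-_A(y))$ applies to every term of the sum over $x\neq y$. But the paper's definition of an interval-valued fuzzy complete graph only imposes this identity for $xy\in E$; the underlying crisp graph $G^*$ is not required to be complete. Indeed, Example~5.6 gives a self complementary interval-valued fuzzy complete graph on $V=\{a,b,c\}$ with $E=\{ab,bc\}$: here $ac\notin E$, so $\mu^-_B(ac)=0$ while $\min(\mu^-_A(a),\mu^-_A(c))=0.1$, and your termwise equality fails at this pair. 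The sentence in which you invoke the self-complementary hypothesis (``forces $\mu^-_B$ to coincide with its rmin-weight on each edge'') is not an argument but a restatement of what needs to be shown; if your reading were correct, self-complementarity would be entirely superfluous, which should already be a warning.

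The paper proceeds differently. It reads self-complementarity as providing a bijection $f:V\to V$ with $\mu^-_A(f(x))=\mu^-_A(x)$ and $\overline{\mu^-_B}(f(x)f(y))=\mu^-_B(xy)$ for all $x,y$, and then inserts the definition of the complement into the second equation to obtain $\mu^-_B(xy)=\min(\mu^-_A(f(x)),\mu^-_A(f(y)))=\min(\mu^-_A(x),\mu^-_A(y))$ for all $x,y\in V$, not merely for $xy\in E$. It is precisely the self-complementary hypothesis that extends the completeness identity from edges in $E$ to all pairs $x\neq y$, after which the summation is immediate; your proposal skips this step.
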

\begin{proof}
Let $G=(A,B)$ be a self complementary interval-valued fuzzy
complete graph. Then there exists an automorphism $f:V \to V$ such
that $\mu^-_A(f(x))=\mu^-_A(x)$, $\mu^+_A(f(x))=\mu^+_A(x)$,
$\overline{\mu^-_B}(f(x)f(y))=\mu^-_B(xy)$ and
$\overline{\mu^+_B}(f(x)f(y))=\mu^+_B(xy)$ for all $x,y\in V$.
Hence, for $x,y\in V$ we obtain
\[
\mu^-_B(x
y)=\overline{\mu^-}_B(f(x)f(y))=\min(\mu^-_A(f(x)),\mu^-_A(f(y)))=\min(\mu^-_A(x),
\mu^-_A(y)),
\]
which implies $a)$. The proof of $b)$ is analogous.
\end{proof}
\begin{proposition}
  Let $G=(A, B)$ be an   interval-valued fuzzy complete graph.  If $ \mu^-_B(x
y)=  \min( \mu^-_A(x), \mu^-_A(y))$ and $ \mu^+_B(x y)=  \min(
\mu^+_A(x), \mu^+_A(y))$ for all $x$, $y$ $\in V$, then $G$ is
self complementary.
\end{proposition}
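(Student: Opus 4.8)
The goal is to establish that $G$ is self complementary, which by the definition above means verifying the identity $\overline{\overline{G}}=G$ for $G=(A,B)$. Since the complement operation fixes the vertex part, $\overline{A}=A$ and hence $\overline{\overline{A}}=A$; the vertex component is therefore immediate, and everything reduces to proving $\overline{\overline{\mu^-_B}}(xy)=\mu^-_B(xy)$ and $\overline{\overline{\mu^+_B}}(xy)=\mu^+_B(xy)$ for all pairs $xy$. I will carry out the lower-membership computation; the upper-membership one is word for word the same with $+$ in place of $-$.

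The plan is a two-step unwinding of the toggle built into the complement, organized by the sign of $m:=\min(\mu^-_A(x),\mu^-_A(y))$. By hypothesis $\mu^-_B(xy)=m$ for every pair. In the case $m>0$, the original weight $\mu^-_B(xy)=m$ is strictly positive, so one application of the complement sends it to $0$; a second application then lands in the branch for a vanishing argument and, using $\overline{A}=A$, returns $\min(\mu^-_A(x),\mu^-_A(y))=m$, so that $\overline{\overline{\mu^-_B}}(xy)=m=\mu^-_B(xy)$. In the case $m=0$, the hypothesis forces $\mu^-_B(xy)=0$ as well; the first complement returns $m=0$ and the second again returns $m=0$, so once more $\overline{\overline{\mu^-_B}}(xy)=0=\mu^-_B(xy)$.

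Assembling the two cases for both memberships gives $\overline{\overline{B}}=B$, and with $\overline{\overline{A}}=A$ this yields $\overline{\overline{G}}=G$; by definition $G$ is therefore self complementary. The single point requiring care, and the only real obstacle, is the interaction of the two branches of the piecewise complement: one must observe that the hypothesis $\mu^-_B(xy)=\min(\mu^-_A(x),\mu^-_A(y))$ is exactly the condition guaranteeing that the weight erased by the first complement (namely $m\mapsto 0$) is the very value restored by the second ($0\mapsto m$). In particular, in the degenerate case $m=0$ one must invoke the hypothesis to conclude $\mu^-_B(xy)=0$, rather than treating $\mu^-_B(xy)$ as an unconstrained quantity.
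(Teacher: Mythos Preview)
Your argument is correct. The paper takes a much shorter route: it simply asserts that under the identity map $I:V\to V$ one has $G=\overline{G}$, and from this concludes $\overline{\overline{G}}=G$. Your approach is different and more explicit: you compute the double complement directly, tracking which branch of the piecewise definition of $\overline{\mu^{\pm}_B}$ fires at each stage, with a case split on whether $m=\min(\mu^-_A(x),\mu^-_A(y))$ is positive or zero. This buys robustness---your argument does not rest on the intermediate claim $G=\overline{G}$, which as written is in tension with the paper's own complement formula (when $m>0$ the hypothesis gives $\mu^-_B(xy)=m>0$, so one application of the complement sends this to $0$, not back to $m$). Your direct two-step unwinding is the cleaner justification of $\overline{\overline{G}}=G$.
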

\begin{proof}
Let  $G=(A, B)$ be an   interval-valued fuzzy complete graph such
that  $ \mu^-_B(x y)=  \min( \mu^-_A(x), \mu^-_A(y))$ and $
\mu^+_B(x y)=  \min( \mu^+_A(x), \mu^+_A(y))$ for all $x$, $y$
$\in V$. Then
  $G=\overline{G}$ under the identity map $I: V \to V$. So
  $\overline{\overline{G}}=G$. Hence $G$ is  self complementary.
 \end{proof}
 \begin{proposition}
 Let $G_1=(A_1, B_1)$ and $G_2=(A_2, B_2)$ be    interval-valued
fuzzy complete graphs. Then $G_1 \cong G_2$  if and only if
$\overline{G}_1 \cong \overline{G}_2$.
\end{proposition}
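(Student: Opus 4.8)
The plan is to prove both implications directly from the definition of isomorphism and the definition of the complement of an interval-valued fuzzy complete graph. For the forward direction, suppose $f: G_1 \to G_2$ is an isomorphism; so $f: V_1 \to V_2$ is a bijection satisfying $\mu^-_{A_1}(x) = \mu^-_{A_2}(f(x))$, $\mu^+_{A_1}(x) = \mu^+_{A_2}(f(x))$ for all $x \in V_1$, and $\mu^-_{B_1}(xy) = \mu^-_{B_2}(f(x)f(y))$, $\mu^+_{B_1}(xy) = \mu^+_{B_2}(f(x)f(y))$ for all $xy \in E_1$. First I would observe that the same bijection $f$ serves as the candidate isomorphism $\overline{G}_1 \to \overline{G}_2$, since $\overline{A}_i = A_i$ makes the vertex conditions automatic. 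The task then reduces to checking the edge conditions $\overline{\mu^-_{B_1}}(xy) = \overline{\mu^-_{B_2}}(f(x)f(y))$ and the analogous $+$ statement for all relevant pairs $x,y$.

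The key step is a case split according to the two-line definition of $\overline{\mu^-_B}$. If $\mu^-_{B_1}(xy) > 0$, then by the isomorphism property $\mu^-_{B_2}(f(x)f(y)) = \mu^-_{B_1}(xy) > 0$ as well, so both $\overline{\mu^-_{B_1}}(xy)$ and $\overline{\mu^-_{B_2}}(f(x)f(y))$ equal $0$. If $\mu^-_{B_1}(xy) = 0$, then likewise $\mu^-_{B_2}(f(x)f(y)) = 0$, so $\overline{\mu^-_{B_1}}(xy) = \min(\mu^-_{A_1}(x), \mu^-_{A_1}(y))$ and $\overline{\mu^-_{B_2}}(f(x)f(y)) = \min(\mu^-_{A_2}(f(x)), \mu^-_{A_2}(f(y)))$, and these agree because $f$ preserves the vertex membership values. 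The same argument with $+$ in place of $-$ handles the upper-membership function. Hence $f$ is an isomorphism $\overline{G}_1 \to \overline{G}_2$, i.e.\ $\overline{G}_1 \cong \overline{G}_2$.

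For the converse, I would apply the forward direction to $\overline{G}_1$ and $\overline{G}_2$: if $\overline{G}_1 \cong \overline{G}_2$, then $\overline{\overline{G}_1} \cong \overline{\overline{G}_2}$. It then suffices to note that taking the complement twice returns the original graph, $\overline{\overline{G_i}} = G_i$. This last fact follows from the definition: complementing swaps the set of edges carrying positive $B$-weight with its relative complement (within $E$), and on the newly weighted edges it assigns exactly $\min(\mu^-_A(x),\mu^-_A(y))$ and $\min(\mu^+_A(x),\mu^+_A(y))$; applying the operation again restores precisely the original weights, since $G$ is complete and so $\mu^-_B(xy) \le \min(\mu^-_A(x),\mu^-_A(y))$ with equality exactly on the edges of $G$. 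I expect the only mildly delicate point to be making the notion ``the same bijection works'' fully rigorous --- in particular confirming that $f$ maps $\overline{E}_1$ onto $\overline{E}_2$ (equivalently $E_1$ onto $E_2$), which is immediate because an isomorphism of interval-valued fuzzy graphs induces an isomorphism of the underlying crisp graphs. Beyond that, the proof is a routine case analysis and carries no real obstacle.
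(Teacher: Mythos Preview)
Your proposal is correct and follows essentially the same route as the paper: for the forward direction you reuse the bijection $f$ and verify the edge conditions of the complement from the definition, and for the converse you invoke the involutivity of the complement---which is exactly what the paper's ``the proof of converse part is straightforward'' amounts to. Your explicit case split on $\mu^-_{B_1}(xy)>0$ versus $\mu^-_{B_1}(xy)=0$ is in fact more careful than the paper's one-line computation (which tacitly treats only the second case), but this is a difference in detail, not in strategy.
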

\begin{proof}
 Assume that $G_1$ and $G_2$ are isomorphic, there exists a
 bijective map $f: V_1 \to V_2$ satisfying
 \[ \mu^-_{A_1}(x)= \mu^-_{A_2}(f(x)), ~  \mu^+_{A_1}(x)= \mu^+_{A_2}(f(x)) ~~{\rm for ~ all } ~x \in V_1,\]
\[ \mu^-_{B_1}(x y)= \mu^-_{B_2}(f(x)f(y)), ~  \mu^+_{B_1}(xy)= \mu^+_{B_2}(f(x)f(y)) ~~{\rm for ~ all } ~xy \in E_1.\]
By definition of complement, we have
\[ \overline{\mu^-}_{B_1}(xy)=\min(\mu^-_{A_1}(x), \mu^-_{A_1}(y)=\min(\mu^-_{A_2}(f(x)), \mu^-_{A_2}(f(y)) )=\overline{\mu^-}_{B_2}(f(x)f(y)), \]
\[ \overline{\mu^+}_{B_1}(xy)=\min(\mu^+_{A_1}(x), \mu^+_{A_1}(y)=\min(\mu^+_{A_2}(f(x)), \mu^+_{A_2}(f(y)) )=\overline{\mu^+}_{B_2}(f(x)f(y)) ~~{\rm for ~ all} ~xy \in E_1. \]
Hence $\overline{G}_1\cong \overline{G}_2$.

The proof of converse part is straightforward.
\end{proof}

\section{Conclusions}
It is well known that   interval-valued fuzzy sets
constitute a generalization of the notion of  fuzzy sets. The  interval-valued fuzzy  models  give more precision, flexibility and compatibility to the system as compared to the classical and fuzzy  models. So we have
introduced interval-valued fuzzy  graphs  and have presented
several properties in this paper.  The further  study of interval-valued fuzzy  graphs may also  be extended with the following projects:
\begin{itemize}
  \item  an application of interval-valued fuzzy graphs in database theory
   \item  an application of interval-valued fuzzy graphs in an expert system
    \item  an application of interval-valued fuzzy graphs in neural networks
  \item  an interval-valued fuzzy graph method for finding the
shortest paths in networks

\end{itemize}
\noindent{\bf\large Acknowledgement.} The authors are  thankful to the  referees  for  their  valuable comments and suggestions.

\end{document}